\documentclass[lettersize,journal]{IEEEtran}
\usepackage{color,array,amsthm}

\setcounter{page}{1}
\usepackage{amsmath, amsthm, amssymb}
\usepackage{csquotes}

\usepackage{hyperref}
\usepackage{float}
\usepackage{cite}
\usepackage{algorithm}
\usepackage{algorithmicx, algpseudocode}
\usepackage[table,xcdraw]{xcolor}
\usepackage{multirow}

\usepackage{mathtools}
\ifCLASSINFOpdf
\usepackage[pdftex]{graphicx}
\else
\fi
\usepackage{amsfonts}
\usepackage{amssymb}
\usepackage{color}
\usepackage{array}
\usepackage{fixltx2e}
\usepackage{stfloats}
\usepackage{url}
\hyphenation{op-tical net-works semi-conduc-tor}
	
\newtheorem{theorem}{Theorem}
\newtheorem{lemma}{Lemma}
\newtheorem{corollary}{Corollary}

\usepackage{booktabs}
\usepackage{varwidth}
\usepackage[table,xcdraw]{xcolor}

\begin{document}

\title{Performance Analysis and Optimization of Multi-RIS-Aided UAV Networks}

\author{Khaled Alshehri$^1$, Anas M. Salhab$^2$, Ali Arshad Nasir$^3$

\thanks{$^1$Department of Mathematics and Statistics, King Fahd University of Petroleum and Minerals (KFUPM), Dhahran, Saudi Arabia (email: s201755390@kfupm.edu.sa);

$^2$Department of Electrical Engineering, King Fahd University of Petroleum and Minerals (KFUPM), Dhahran, Saudi Arabia (email: salhab@kfupm.edu.sa); 

$^3$Department of Electrical Engineering, King Fahd University of Petroleum and Minerals (KFUPM), Dhahran, Saudi Arabia (email: anasir@kfupm.edu.sa)}}

\maketitle
\begin{abstract}
In this paper, we study the performance of multiple reconfigurable intelligent surfaces (RISs)-aided unmanned aerial vehicle (UAV) communication networks over Nakagami-$m$ fading channels. For that purpose, we used accurate closed-form approximations for the channel distributions to derive closed-form approximations for the outage probability (OP), average symbol error probability (ASEP), and the average channel capacity assuming independent non-identically distributed (i.ni.d.) channels. Furthermore, we derive the asymptotic OP at the high signal-to-noise ratio (SNR) regime to get more insights into the system performance. We also study some practical scenarios related to RISs, UAV, and destination locations and illustrate their impact on the system performance through simulations. Finally, we provide an optimization problem on the transmit power of each channel.
\end{abstract}
\begin{IEEEkeywords}
Reconfigurable intelligent surfaces, optimization, unmanned aerial vehicle, Nakagami-$m$ channel.
\end{IEEEkeywords}
\IEEEpeerreviewmaketitle
\section{INTRODUCTION}
Unmanned aerial vehicle (UAV) has received considerable attention as a solution for many wireless communication problems, $ e.g.$, networking problems in remote areas and locations of disasters. 
Given the need to improve the performance and coverage of UAV communication systems, many researchers have focused on reconfigurable intelligent surfaces (RISs) as a promising complementary solution. 
RIS is an artificial surface made of electromagnetic material capable of customizing the propagation of the radio waves impinging upon it \cite{Renzo1}- \cite{Alouini1}. It represents a new low-cost/less-complicated solution to realize wireless communication with high spectral and energy efficiencies.

Because of the promising potential of RIS technology, it has been recently used to enhance the performance of UAV networks in many studies and under different scenarios. In literature, there are various configurations for RIS-aided UAV systems; either a  RIS is installed on the UAV or left stationary. For example, one of the earliest papers on RIS-UAV systems is \cite{LiangdiRenzo}, where a system of a UAV that sends a signal to a receiver aided by stationary RISs were presented. The authors proposed an optimization algorithm to maximize the average achievable rate. In a more recent study, \cite{Yang4} considered a scenario where a source sends a signal to a UAV through a stationary RIS and then to a destination. The authors modeled the signal-to-noise ratio (SNR) statistics of the ground-to-air (G2A) link by considering the distribution of channel gains between the source and a RIS element to be Rayleigh, while the channel between the RIS element and UAV is Rician. For the air-to-ground (A2G) link, the authors modeled the SNR statistics with a Noncentral chi-squared distribution. Expressions of the outage probability and bit error rate were derived based on the approximations of the system SNR statistics. The average channel capacity was also derived using a tight upper bound.

Another example is the work done by the authors in \cite{Agrawal}, where they considered a UAV communicating with a vehicle through a RIS with multiple interfering vehicles operating in the same frequency spectrum. The authors analyzed the system performance for finite blocklength (FBL) and infinite BL-based transmissions. They assumed the RIS channel gains follow a Nakagami-$m$ distribution. \cite{Zhangs} is one of the earliest papers that treated the case where RIS is installed on a UAV. The authors proposed a Q-learning algorithm to optimize the downlink communication capacity.
 In a recent paper \cite{YangLi}, the authors developed two different distributions to model the system's SNR statistical distribution.
 
Although previous studies did provide important insights into UAV and RIS-aided UAV systems, the multi-RIS-aided UAV networks were not properly studied, especially under the multi-RIS model in \cite{Salhab} where the authors analyzed the statistics of an opportunistic RIS selection model using the Laguerre series approach. Using multi-RIS instead of a single RIS increases the system diversity order when the hop where RISs are used dominates the system performance.

Motivated by the above observations, we summarize the contribution of this paper as follows: 
\begin{itemize}
\item We provide a performance analysis of multi-RIS-aided UAV networks over Nakagami-$m$ channels for the source-to-RISs links and Rician fading channel for the UAV-to-destination fading link. To cover more scenarios and to have flexibility in RIS locations between the source and UAV, an independent and non-identically distributed (i.ni.d.) case is considered for RIS channels.

\item We derive accurate approximations for the system outage probability (OP), average symbol error probability (ASEP), and average channel capacity. The derived expressions are valid for an arbitrary number of reflecting elements and non-integer values of the Nakagami fading parameter. 

\item We further study the effect of various system parameters such as the number of reflecting elements, number of RISs, distances between the source to RIS and RIS to UAV, location of RISs, UAV's height and horizontal distance, etc., on the system performance. They are investigated under realistic conditions, as shown in the Numerical Results Section.

\item We also formulate and solve an optimization problem on the transmit power of each channel. Given a total transmit power $E_{T}$, what is the optimal transmit power for the first and second channels that minimize the total outage probability?
\end{itemize}

\subsection{Paper Organization, Notations, and Symbols}
The rest of this paper is organized as follows. Section \ref{SCMs} presents the system and channel models. The performance analysis is evaluated in Section \ref{EPA}. The optimization problem and its solution are presented in Section \ref{optimization}. Simulations and numerical results are discussed in Section \ref{Numerical Results}. Finally, the paper is concluded in Section \ref{C}.

\textit{Notations and symbols:}
The functions and operators used throughout this paper are in Table.\ref{TEST1}.

	\begin{table}[]
		\centering
		\caption{Notations used in the paper.}
		\label{TEST1}
  \scalebox{0.95}{
		\begin{tabular}{|l|l|}	
			\hline
			\textbf{Notation}          & \textbf{Definition}                                        \\ \hline
			$ P_{r}\left[\cdot\right] $           & Probability operator                            \\ \hline
			$F_{X}\left(x\right)$      & Cumulative distribution function of a random variable $ X $ \\ \hline
			$f_{X}\left(x\right)$      & Probability density function of a random variable $ X $    \\ \hline
			$|\cdot|$                  & Absolute value                                              \\ \hline
			$\mathbb{C}^{m\times n}$  & Set of matrices with dimension $m\times n$                    \\ \hline
			
			$(\cdot)^{T}$              & Transpose operator                                            \\ \hline
			$\mathbb{E}\left[\cdot\right]$               & Expectation operator                                              \\ \hline
			${Var}\left(\cdot\right)$               & Variance operator                                              \\ \hline
			$\gamma(\cdot,\cdot)$        & Lower incomplete Gamma function                                   \\ \hline
			$\Gamma(\cdot)$            & Gamma function                                                    \\ \hline
			$\exp(\cdot)$             & Exponential function    \\ \hline
			$ Q(\cdot) $             &   Q-function    \\ \hline
		    $L_n^{(\alpha)}(\cdot)$           & Generalized Laguerre polynomials                             \\ \hline
		    $Q_v\left(\cdot,\cdot\right)$           & Generalized Marcum Q-function                                                 \\ \hline
		    ${ }_{p} F_{q}\left(\cdot ; \cdot ; \cdot\right)$           & Generalized hypergeometric function \\ \hline
		    $I_{v}\left(\cdot\right)$           & Modified Bessel function of the first kind
      \\ \hline 
		\end{tabular} }
	\end{table} 

\section{SYSTEM AND CHANNEL MODELS}\label{SCMs}
Consider a multi-RIS network that assists the communication between a source and the UAV in the first hop. Only one among the multiple RISs is selected for communication. During the second hop, RIS forwards the decoded version of the signal to the destination.

This section illustrates the proposed system and channel models of multi-RIS-assisted UAV systems with the RIS selection strategy.

\subsection{Multiple RISs-Assisted G2A Channel}
We consider a multi-RIS assisted network, in which a single-antenna source (S) communicates with a UAV assisted by $K$ multiple RISs on the ground. Each RIS, $\left\{\text{RIS}_{k}\right\}_{k=1}^{K}$, is equipped with $N_{k}$ passive elements. We also assume that the communication between the S and UAV is only provided via RISs and the direct communication links between them are unavailable due to natural or man-made obstacles. The S-RIS$_{k}$ links in the first hop are assumed to undergo Nakagami-\emph{m} fading with shape parameter $m_{k,1}$ and scale parameter $\Omega_{k,1}$, where the channel vector between the S and RIS$_{k}$ is denoted by $\mathbf{h}_{k}\in \mathbb{C}^{N_{k}\times1}$, $\mathbf{h}_{k} = \left[h^{(1)}_{k}, ..., h^{(i)}_{k}, ..., h^{(N_{k})}_{k}\right]^{T}$, $h^{(i)}_{k}=\frac{1}{\sqrt{P_{L,k,1}}}\alpha^{(i)}_{k}e^{-j\phi^{(i)}_{k}}$ denotes the channel coefficient between the S and RIS$_{k}$ $i$th element, where $P_{L,k,1}$, $\alpha^{(i)}_{k}$, and $\phi^{(i)}_{k}$,respectively refer to the path-loss, channel amplitude, and channel phase of for the first hop. Likewise, the RIS$_{k}$-UAV links are also assumed to have Nakagami-\emph{m} fading channel model with shape parameter $m_{k,2}$ and scale parameter $\Omega_{k,2}$, where the channel vector between the RIS$_{k}$ and UAV is denoted by $\mathbf{g}_{k}\in \mathbb{C}^{N_{k}\times1}$, $\mathbf{g}_{k} = \left[g^{(1)}_{k}, ..., g^{(i)}_{k}, ..., g^{(N_{k})}_{k}\right]^{T}$, $g^{(i)}_{k} = \frac{1}{\sqrt{P_{L,k,2}}}\beta^{(i)}_{k}e^{-j\Phi^{(i)}_{k}}$ denotes the channel coefficient between $i$th RIS element and UAV, where $P_{L,k,2}$, $\beta^{(i)}_{k}$, and $\Phi^{(i)}_{k}$, respectively refer to the path-loss, channel amplitude, and channel phase of the second hop. The reflection coefficients of the RIS$_{k}$ are denoted by the entries of the diagonal matrix $\mathbf{\Theta}_{k}\in \mathbb{C}^{N_{k}\times N_{k}}$, for the $i$th element. Under the full reflection assumption, we have $\Theta^{(i,i)}_{k}=e^{j\theta^{(i)}_{k}}$, where $\theta^{(i)}_{k}\in[0,2\pi)$. 
The signal received at UAV from the reflected signals of RIS$_{k}$ can be expressed as
\begin{align} 
\label{receivedsignal}
y_{k} &= \sqrt{\frac{E_{s}}{P_{L,k}}}\sum_{i = 1}^{N_{k}}\alpha^{(i)}_{k}\beta^{(i)}_{k}s + n_{k},
\end{align}
where $E_{s}$ is the average power of the transmitted signal, $s$ is the transmitted signal, ${n}_{k}$ denotes the additive white Gaussian noise (AWGN) sample with zero mean and variance $N_{0}$, and $P_{L,k}$ denotes the overall path-loss of the RIS$_{k}$-assisted path. Here, $ P_{L,k} = P_{L,k,1}P_{L,k,2} $ and given by \cite{Basar}
\begin{align} 
\label{path-loss}
P_{L,k} &= \left(\left(\frac{\lambda}{4\pi}\right)^{4}\frac{G_{k,1}G_{k,2}}{d^{2}_{k,1}d^{2}_{k,2}}\epsilon_{k}\right)^{-1},
\end{align}
where $\lambda$ is the wavelength, $ G_{k,1}$ and $G_{k,2}$ are the gains of the RIS$_{k}$ in the first and second hops, respectively, and $\epsilon_{k}$ is the efficiency of RIS$_{k}$, which is described as ratio of transmitted signal power by RIS to received signal power by RIS. In this paper, it is assumed that $\epsilon_{k} = 1$. Furthrmore, $d_{k,1}$ and $d_{k,2}$ are the distances from the source-to-RIS$_{k}$ and RIS$_{k}$-to-UAV, respectively.

The RIS$_{k}$ optimizes the phase reflection coefficients to maximize the received SNR at UAV, by aligning the phases of the reflected signals to the sum of the phases of its incoming and outgoing fading channels. Thus, the maximized e2e SNR for RIS$_{k}$ can be expressed as \cite{Alouini1}
\begin{equation} \label{eq:gamma}
\gamma_{k} =\frac{E_{s}}{N_{0}P_{L,k}}\left ({\sum _{i=1}^{N_{k}} \alpha^{(i)}_{k}\beta^{(i)}_{k} }\right)^{2}= \frac{\overline{\gamma}}{P_{L,k}}Z^{2}_{k},
\end{equation}
where $\overline{\gamma}=\frac{E_{s}}{N_{0}}$ is the average SNR. 

\subsection{RIS Selection Strategy}\label{sec:RIS_sel_st}
In this system, we consider that one out of $K$ RISs is selected to aid the communications. Specifically, the choice of the suitable RIS is performed to maximize the received signal at the UAV. Therefore, the maximized end-to-end(e2e) SNR of the selected RIS can be expressed as
\begin{equation}\label{eq:RISselection}
	\gamma_a=\underset{k= 1,...,K}{{\max}}\{\gamma_{k}\}. 
\end{equation}
Let $a_{k}=\frac{m_{k,1} m_{k,2} N_{k} \Gamma (m_{k,1})^2 \Gamma (m_{k,2})^2}{m_{k,1} m_{k,2} \Gamma (m_{k,1})^2 \Gamma (m_{k,2})^2-\Gamma \left(m_{k,1}+\frac{1}{2}\right)^2 \Gamma \left(m_{k,2}+\frac{1}{2}\right)^2}-N_{k},$ and \begin{small} $b_{k}=\frac{m_{k,1} m_{k,2} \Gamma (m_{k,1})^2 \Gamma (m_{k,2})^2-\Gamma \left(m_{k,1}+\frac{1}{2}\right)^2 \Gamma \left(m_{k,2}+\frac{1}{2}\right)^2}{\sqrt{\frac{m_{k,1}}{\Omega_{k,1}}} \Gamma (m_{k,1}) \Gamma \left(m_{k,1}+\frac{1}{2}\right) \sqrt{\frac{m_{k,2}}{\Omega_{k,2}}} \Gamma (m_{k,2}) \Gamma \left(m_{k,2}+\frac{1}{2}\right)}$.\end{small}
Then The CDF and PDF of $\gamma_{a}$ can be, respectively given by \cite{Salhab}  \begin{small}
\begin{equation}\label{eq:CDFRISselectionsimple}
 F_{\gamma_{a}}(\gamma) = \sum_{n_{1} = 0}^{\infty}...\sum_{n_{K} = 0}^{\infty} \prod_{k=1}^{K}\frac{\left(-1\right)^{n_{k}}\left(\sqrt{\frac{P_{L,k} }{\overline{\gamma_{a}}b^{2}_{k}}}\right)^{a_{k} + n_{k}}}{n_{k}!\left(a_{k} + n_{k}\right)\Gamma (a_{k})} \gamma^{\frac{\sum_{k=1}^{K}\left(a_{k} + n_{k}\right)}{2}},
\end{equation}
and 
\begin{align}\label{eq:PDFRISselection}
\nonumber f_{\gamma_{a}}(\gamma) &= \sum_{j = 1}^{K}\frac{\left(\frac{ P_{L,j} }{\overline{\gamma}}\gamma\right)^{\frac{a_{j}-1}{2}} \exp \left(-\sqrt{\frac{ P_{L,j}}{\overline{\gamma_{a}}b^{2}_{j}}\gamma}\right)}{2 b^{a_{j}}_{j} \Gamma (a_{j}) \sqrt{\frac{\overline{\gamma_{a}} }{P_{L,j}}\gamma }}\sum_{n_{k}\ne j, n_{k}=1}^{\infty}...\sum_{n_{k}\ne j, n_{k}=K}^{\infty} \\ &\prod_{n_{k}\ne j, n_{k}=1}^{K}\frac{\left(-1\right)^{n_{k}}\left(\sqrt{\frac{P_{L,k} }{\overline{\gamma_{a}}b^{2}_{k}}}\right)^{a_{k} + n_{k}}}{n_{k}!\left(a_{k} + n_{k}\right)\Gamma (a_{k})} \gamma^{\frac{\sum_{k\ne j, k=1}^{K}\left(a_{k} + n_{k}\right)}{2}}.
\end{align} \end{small}

\subsection{A2G Channel}\label{sec:RIS_sel_st}
In the second channel, the UAV sends the signal to a destination on ground. The signal received at the destination (D) is given by \begin{equation}
y_{b}=h_0x+n_u,
\end{equation}
where $h_0=\frac{1}{\sqrt{L}} \chi e^{-j \theta}$ with $\theta$ being the phase of the second channel and $n_{u} {\sim}  \mathcal{C N}(0,N_{u})$ denotes the additive white Gaussian noise. Then, the instantaneous SNR at D is \begin{equation}
\gamma_{b}=\frac{|\chi|^{2} E_{u}}{N_{u} L},
\end{equation}
where $L=10 \cdot \log _{10}\left(L_{0}^{\alpha}\right)+A$ is the path-loss, $E_{u}$ is the transmit power of the UAV, $L_{0}=\sqrt{h^{2}+r_{0}^{2}}$ is the distance from UAV to D, where $r_{0}$ and $h$ are the UAV's horizontal distance from D and height, respectively. $\alpha$ is the path-loss exponent and it is given by $\alpha=a_{1} P_{L o S}+b_{1}$, where $a_{1} , b_{1}$ are constants determined by the environment and the transmit frequency and $P_{LoS}$ is the line of sight(LoS) probability. $\chi$ is the channel amplitude, which follows a Rician distribution. Then, the PDF of $\gamma_{b}$ can be written as 
\begin{align}
f_{\gamma_{b}}(\gamma)&=\frac{(1+K_{0}) e^{-K_{0}} L}{\overline{\gamma_{b}}} \exp \left(-\frac{1+K_{0}}{\overline{\gamma_{b}}} L \gamma\right) \\ \notag &\times I_{0}\left(2 \sqrt{\frac{K_{0}(1+K_{0})}{\overline{\gamma_{b}}} L \gamma}\right),
\end{align} 
where $\overline{\gamma_{b}}=\frac{E_{u}}{N_{u}}$ , $K_{0}$ is the Rician factor. The CDF of $\gamma_{b}$ is given by \cite{Azari}
\begin{equation}\label{eq:18}
F_{\gamma_{b}}(\gamma)=1-Q_{1}\left(\sqrt{2 K_{0}}, \sqrt{\frac{2 \gamma L(1+K_{0})}{\overline{\gamma_{b}}}}\right).
\end{equation}

The LoS probability $P_{LoS}$ is modeled as follows \cite{Hourani}:
\begin{equation}
P_{L o S}=\prod_{n=0}^{m= \lfloor  \frac{r_{0}}{200} \sqrt{6}-1\rfloor}\left[1-\exp \left(-h^{2} \cdot \frac{\left(1-\frac{n+\frac{1}{2}}{m+1}\right)^{2}}{450}\right)\right]
.\end{equation} Also, from \cite{Azari}, we model the Rician factor as $K_{0}=a_{2} \cdot e^{b_{2} \tan ^{-1}\left(\frac{h}{r_{0}}\right)},$ where $a_{2}$ and $b_{2}$ are constants to be determined later.
\section{PERFORMANCE ANALYSIS}\label{EPA}
In this section, we derive theoretical expressions for the system outage probability and average symbol error probability. 
\subsection{Outage Probability}
The OP of each link is given by
\begin{equation}\label{eq:Pout}
P_{i,\text{out}}=F_{\gamma_{i}}(\gamma_\text{out}).
\end{equation}
Where $i \in \{a,b\}.$

We measure the total system outage performance by considering the probability that $\min \left\{\gamma_{a}, \gamma_{b}\right\}$ falls below a predetermined threshold value. Therefore, The total OP is given by \begin{equation}\label{eq:21}
P_\text{out}=P_{a,\text{out}}+P_{b,\text{out}}-P_{a,\text{out}} P_{b,\text{out}}
\end{equation}

\begin{corollary} The asymptotic expression of the OP, when $\min \left\{\overline{\gamma_{a}}, \overline{\gamma_{b}}\right\} \rightarrow \infty$, is given by
\begin{align}\label{outasym}
   \notag P_\text{out}^{\infty}&= \prod_{k=1}^{K}\left(\frac{b_{k}^{2}}{\gamma_\text{out} P_{L, k} \Gamma\left(a_{k}\right)^{-\frac{2}{a_{k}}}} \overline{\gamma_{a}}\right)^{-\frac{a_{k}}{2}} \\&+\frac{e^{-K_{0}}\left(1+K_{0}\right) \gamma_\text{out}L}{\overline{\gamma_{b}}}
\end{align}
\end{corollary}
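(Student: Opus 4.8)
The plan is to analyze the two links separately in the high-SNR limit and then recombine them through the total-OP identity \eqref{eq:21}, discarding the product term $P_{a,\text{out}}P_{b,\text{out}}$ because it decays strictly faster than either summand. The whole task therefore reduces to extracting the leading-order behaviour of $F_{\gamma_a}(\gamma_\text{out})$ as $\overline{\gamma_a}\to\infty$ and of $F_{\gamma_b}(\gamma_\text{out})$ as $\overline{\gamma_b}\to\infty$.

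First I would treat the G2A link. In the series \eqref{eq:CDFRISselectionsimple} every summand carries the factor $\prod_{k}\left(\sqrt{P_{L,k}/(\overline{\gamma_a}b_k^2)}\right)^{a_k+n_k}$, so the power of $\overline{\gamma_a}$ attached to a generic term is $-\tfrac{1}{2}\sum_k (a_k+n_k)$. As $\overline{\gamma_a}\to\infty$ the slowest-decaying contribution is the one minimizing this exponent, namely the multi-index $n_1=\dots=n_K=0$, and every other term is $o$ of it. Retaining only that term and writing $\gamma_\text{out}^{\sum_k a_k/2}=\prod_k\gamma_\text{out}^{a_k/2}$ collapses the $K$-fold sum into a single product of factors $\left(P_{L,k}\gamma_\text{out}/(\overline{\gamma_a}b_k^2)\right)^{a_k/2}$ with a Gamma-function prefactor; absorbing the latter into the base as $\Gamma(a_k)^{-2/a_k}$ reproduces the first term of \eqref{outasym}.

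Next I would handle the A2G link via \eqref{eq:18}. Since $\overline{\gamma_b}\to\infty$ forces the second argument $b=\sqrt{2\gamma_\text{out}L(1+K_0)/\overline{\gamma_b}}$ of the Marcum function to zero while the first argument $a=\sqrt{2K_0}$ stays fixed, I would use the integral form $1-Q_1(a,b)=\int_0^b x\,e^{-(x^2+a^2)/2}I_0(ax)\,dx$ and expand the integrand about $x=0$, where $e^{-x^2/2}I_0(ax)=1+O(x^2)$. The leading contribution is $e^{-a^2/2}\int_0^b x\,dx=e^{-a^2/2}b^2/2$; substituting $a^2/2=K_0$ and $b^2/2=\gamma_\text{out}L(1+K_0)/\overline{\gamma_b}$ gives exactly the second term of \eqref{outasym}.

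Finally I would combine the two pieces in \eqref{eq:21}: the cross term is $O(\overline{\gamma_a}^{-\sum_k a_k/2}\,\overline{\gamma_b}^{-1})$, hence asymptotically negligible relative to each individual $O(\overline{\gamma_a}^{-\sum_k a_k/2})$ and $O(\overline{\gamma_b}^{-1})$ summand, leaving $P_\text{out}^\infty$ as their sum. I expect the main obstacle to be rigorously justifying that the $n_1=\dots=n_K=0$ term genuinely dominates the nested multi-index series — that is, controlling the tails of the infinite sums so the neglected terms are provably $o(\overline{\gamma_a}^{-\sum_k a_k/2})$ — rather than the Marcum expansion, which is routine once the small-argument limit of $I_0$ is invoked.
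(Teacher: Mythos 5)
Your proposal is correct and follows the same overall structure as the paper's proof: drop the cross term in \eqref{eq:21}, keep the leading multi-index $n_1=\cdots=n_K=0$ term of the G2A series \eqref{eq:CDFRISselectionsimple}, and extract the $O(1/\overline{\gamma_b})$ behaviour of $1-Q_1$ in \eqref{eq:18}. The only substantive difference is the A2G step: the paper invokes the Laguerre (orthogonal-polynomial) series of the Marcum $Q$-function from \cite{Andras} and truncates to its first term, whereas you start from the integral representation $1-Q_1(a,b)=\int_0^b x\,e^{-(x^2+a^2)/2}I_0(ax)\,dx$ and expand about $x=0$. Both give the same leading term $e^{-K_0}(1+K_0)L\gamma_\text{out}/\overline{\gamma_b}$; your route is more elementary and self-contained, while the paper's series would hand you the higher-order corrections for free. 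Two small remarks. First, retaining the $n_k=0$ term actually produces the factor $1/\bigl(a_k\Gamma(a_k)\bigr)=1/\Gamma(a_k+1)$, so the constant in \eqref{outasym} should strictly read $\Gamma(a_k+1)^{-2/a_k}$ rather than $\Gamma(a_k)^{-2/a_k}$; this is a typo in the corollary statement that neither your write-up nor the paper's proof flags when asserting the match. Second, your stated concern about rigorously bounding the tails of the nested alternating series is legitimate, but the paper does not address it either --- it simply asserts that the first term dominates --- so on this point your argument is no less complete than the original.
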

\begin{proof}
by \eqref{eq:21} we have : 
\begin{equation}
P_\text{out}^{\infty}=P_{a,\text{out}}^{\infty}+P_{b,\text{out}}^{\infty}-P_{a,\text{out}}^{\infty} P_{b,\text{out}}^{\infty} \rightarrow P_{a,\text{out}}^{\infty}+P_{b,\text{out}}^{\infty}\end{equation}
The asymptotic OP for the first channel can be calculated by $P^{\infty}_{a,\text{out}}=F^{\infty}_{\gamma_{a}}(\gamma_\text{out})=\prod_{k = 1}^{K}F^{\infty}_{\gamma_{k}}(\gamma_\text{out})$. Then, using \cite[Eq. (8.354.1)]{Grad.}, we get
$F^{\infty}_{\gamma_{k}}(\gamma_\text{out})= \sum_{n_{k}=0}^{\infty}\frac{(-1)^{n_{k}}\left(\sqrt{\frac{\gamma_\text{out}}{\frac{\overline{\gamma_{a}}}{P_{L,k}}}}\right)^{a_{k}+n_{k}}}{(a_{k}+n_{k})b^{a_{k}+n_{k}}_{k}\Gamma(a_{k})} $. As $\overline{\gamma_{a}}\rightarrow\infty$, this expression is only dominated by the first term in summation. Upon considering that, we get $P^{\infty}_{a,\text{out}}$.
For the second channel, the Laguerre series of the OP \cite[Eq. (8)]{Andras}: $P_{b, \text{out}}=e^{-K_{0}} \sum_{i=0}^{\infty}(-1)^{K_{0}} \frac{L_{i}^{0}\left(K_{0}\right)}{\Gamma(i+2)}\left(\frac{\gamma_\text{out}\left(1+K_{0}\right)L}{\overline{\gamma_{b}}}\right)^{i+1}$, by considering the first term the proof is completed.
\end{proof}

\subsection{Average Symbol Error Probability}
\begin{theorem}
The ASEP for each channel can be given by
\begin{align}\label{eq:ASEPclosedform}
\notag P_{a}&=\frac{p\sqrt{q}}{2\sqrt{\pi}}\sum_{n_{1} = 0}^{\infty}...\sum_{n_{K} = 0}^{\infty} \prod_{k=1}^{K}\frac{\left(-1\right)^{n_{k}}\left(\sqrt{\frac{P_{L,k} }{\overline{\gamma_{a}}b^{2}_{k}}}\right)^{a_{k} + n_{k}}}{n_{k}!\left(a_{k} + n_{k}\right)\Gamma (a_{k})} \\ &\times\frac{1}{q^{\frac{\sum_{k=1}^{K}\left(a_{k} + n_{k}\right)}{2} + \frac{1}{2}}}\Gamma\left(\frac{\sum_{k=1}^{K}\left(a_{k} + n_{k}\right)}{2}+\frac{1}{2}\right),
\end{align}
\begin{align}\label{eq:pb}
\notag P_{b}&=\frac{p \sqrt{q}}{2} \sqrt{\frac{\frac{\overline{\gamma}_{b}}{L}}{q \frac{\overline{\gamma}_{b}}{L}+K_{0}+1}} e^{-K_{0}} \\ \notag&\times\Biggl[\Phi_{1}\left(\frac{1}{2}, 1,1 ; \frac{K_{0}+1}{q \frac{\overline{\gamma}_{b}}{L}+K_{0}+1}, \frac{K_{0}\left(K_{0}+1\right)}{q \frac{\overline{\gamma}_{b}}{L}+K_{0}+1}\right)\\  &-{ }_{1} F_{1}\left(\frac{1}{2} ; 1 ; \frac{K_{0}\left(K_{0}+1\right)}{q \frac{\overline{\gamma}_{b}}{L}+K_{0}+1}\right) \Biggr],
\end{align}
where $p$ and $q$ are constants representing the type of modulation. 
\end{theorem}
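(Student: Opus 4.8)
The plan is to obtain both expressions from a single ASEP master integral $P_i = p\int_0^\infty Q(\sqrt{2q\gamma})\,f_{\gamma_i}(\gamma)\,d\gamma$, $i\in\{a,b\}$, where $pQ(\sqrt{2q\gamma})$ is the conditional symbol error probability for the modulation type fixed by $p$ and $q$. First I would integrate by parts to move the Gaussian tail onto the CDF. Using $\frac{d}{d\gamma}Q(\sqrt{2q\gamma}) = -\frac{\sqrt q}{2\sqrt{\pi\gamma}}e^{-q\gamma}$ and noting that the boundary terms vanish (at $\gamma\to\infty$ because $Q\to 0$, and at $\gamma=0$ because $F_{\gamma_i}(0)=0$), the master integral becomes $P_i = \frac{p\sqrt q}{2\sqrt\pi}\int_0^\infty \gamma^{-1/2}e^{-q\gamma}F_{\gamma_i}(\gamma)\,d\gamma$. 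This CDF-based form is what makes the series structures line up with the stated results.

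For $P_a$ I would substitute the series CDF \eqref{eq:CDFRISselectionsimple}, interchange the (absolutely convergent) multiple summation with the integral, and evaluate each term with the elementary Gamma integral $\int_0^\infty \gamma^{s-1}e^{-q\gamma}\,d\gamma = \Gamma(s)\,q^{-s}$ taken at $s = \tfrac12\sum_{k=1}^{K}(a_k+n_k)+\tfrac12$. The monomial $\gamma^{\frac12\sum_k(a_k+n_k)}$ in the CDF combines with the $\gamma^{-1/2}$ weight to give exactly this exponent, so the integral returns $\Gamma\!\left(\tfrac12\sum_k(a_k+n_k)+\tfrac12\right) q^{-(\frac12\sum_k(a_k+n_k)+\frac12)}$, reproducing \eqref{eq:ASEPclosedform} once the product of coefficients is carried along. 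This part is routine.

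For $P_b$ the same master formula applies, but it is cleaner to keep the PDF form, $P_b = \frac{p}{2}\int_0^\infty \mathrm{erfc}(\sqrt{q\gamma})\,f_{\gamma_b}(\gamma)\,d\gamma$, and substitute the Rician density, which (with $\eta := L(1+K_0)/\overline{\gamma_b}$) reads $f_{\gamma_b}(\gamma) = \eta e^{-K_0}e^{-\eta\gamma}I_0(2\sqrt{K_0\eta\gamma})$. Expanding $I_0$ by its power series $I_0(2\sqrt{K_0\eta\gamma}) = \sum_{n\ge 0}(K_0\eta\gamma)^n/(n!)^2$ and interchanging sum and integral reduces the problem to the family $\int_0^\infty \gamma^{n}e^{-\eta\gamma}\,\mathrm{erfc}(\sqrt{q\gamma})\,d\gamma$, each member of which is a known closed form (a Gamma factor times a finite sum or a ${}_2F_1$) available from a standard table \cite{Grad.}. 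Resumming the resulting double series against the defining Humbert series of $\Phi_1(\tfrac12,1;1;\cdot,\cdot)$ — after checking that the arguments simplify to $\tfrac{K_0+1}{q\overline{\gamma_b}/L+K_0+1}$ and $\tfrac{K_0(K_0+1)}{q\overline{\gamma_b}/L+K_0+1}$ — and separating the line-of-sight-only ($\mathrm{erfc}$-free) contribution as the ${}_1F_1$ term would yield \eqref{eq:pb}.

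The hard part is the $P_b$ evaluation: the integrand couples two distinct scales — the Gaussian-tail scale $q$ from the modulation and the Rician line-of-sight scale $K_0\eta$ from the $I_0$ factor — so no single-variable hypergeometric can absorb it, and a genuinely two-variable function ($\Phi_1$) is forced. The delicate steps are (i) justifying the term-by-term interchange of the infinite Bessel series with the integral and with the finite sums coming from the $\mathrm{erfc}$ integral, and (ii) recognizing the collapsed double series precisely as the Humbert $\Phi_1$ minus its ${}_1F_1$ limit; matching the prefactor $\sqrt{\tfrac{\overline{\gamma_b}/L}{q\overline{\gamma_b}/L+K_0+1}}$ and the overall $e^{-K_0}$ is then bookkeeping. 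An equivalent route keeps the CDF \eqref{eq:18} and invokes a tabulated Marcum-$Q$ integral $\int_0^\infty \gamma^{-1/2}e^{-q\gamma}Q_1(\sqrt{2K_0},\sqrt{c\gamma})\,d\gamma$ with $c=2L(1+K_0)/\overline{\gamma_b}$ directly; either way the identification of the $\Phi_1$ closed form is the crux.
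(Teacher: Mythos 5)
Your proposal is correct, and for $P_a$ it coincides with the paper's proof: the paper likewise starts from $P_i=E_{\gamma_i}[pQ(\sqrt{2q\gamma})]$, passes (via the same integration by parts, quoted from McKay) to the CDF-weighted integral $\frac{p\sqrt q}{2\sqrt\pi}\int_0^\infty\gamma^{-1/2}e^{-q\gamma}F_{\gamma_i}(\gamma)\,d\gamma$, inserts the series CDF of $\gamma_a$, and evaluates term by term with $\int_0^\infty x^{v-1}e^{-\mu x}dx=\mu^{-v}\Gamma(v)$. The only divergence is in $P_b$: the paper does \emph{not} expand $I_0$ and resum into the Humbert function; it keeps the Marcum-$Q$ form of the CDF in \eqref{eq:18} and invokes, in one step, the tabulated closed form of $\int_0^\infty\gamma^{-1/2}e^{-q\gamma}Q_1(a,b\sqrt{\gamma})\,d\gamma$ from Sofotasios \emph{et al.} (the $\Phi_1$/${}_1F_1$ identity is imported, not derived). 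This is exactly the ``equivalent route'' you name at the end, so your plan subsumes the paper's argument; your primary route is essentially a from-scratch re-derivation of that cited lemma. What your route buys is self-containment and an explicit view of why a two-variable hypergeometric is unavoidable; what it costs is that the resummation of the double series into $\Phi_1\left(\frac12,1;1;\cdot,\cdot\right)$ minus its ${}_1F_1$ limit --- which you correctly flag as the crux --- is left unexecuted, whereas the paper sidesteps it entirely by citation. Since that identity is established in the cited reference, this is a difference of presentation rather than a gap; if you intend the Bessel-series route to stand on its own, you would need to actually carry out the resummation (and the interchange justifications you mention), at which point the table-lookup route is clearly the shorter proof.
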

\begin{proof} 
Our results apply for all general modulation formats that have an ASEP expression of the form	
\begin{align}\label{eq:ASEPgeneral}
P_{i}=E_{\gamma_{i}}\left[pQ(\sqrt{2q\gamma})\right],
\end{align} 
where $ Q(\cdot) $ is the Gaussian Q-function. Such modulation formats include binary phase-shift keying (BPSK) ($ p = 1, q = 1 $) and M-ary PSK ($ p = 2, q = \sin\left(\frac{2\pi}{M}\right) $).		
The ASEP can be obtained as follows \cite{McKay}
\begin{align}\label{eq:ASEP}
P_{i}=\frac{p\sqrt{q}}{2\sqrt{\pi}}\int_{0}^{\infty}\frac{\exp\left(-q\gamma\right)}{\sqrt{\gamma}}F_{\gamma_{i}}(\gamma) d\gamma.
\end{align}
For $P_{a}$, inserting \eqref{eq:CDFRISselectionsimple} in \eqref{eq:ASEP} gives
\begin{align}\label{eq:ASEP1}
\notag P_{a}&=\frac{p\sqrt{q}}{2\sqrt{\pi}}\sum_{n_{1} = 0}^{\infty}...\sum_{n_{K} = 0}^{\infty} \prod_{k=1}^{K}\frac{\left(-1\right)^{n_{k}}\left(\sqrt{\frac{P_{L,k} }{\overline{\gamma}b^{2}_{k}}}\right)^{a_{k} + n_{k}}}{n_{k}!\left(a_{k} + n_{k}\right)\Gamma (a_{k})} \\  &\times\int_{0}^{\infty} \gamma^{\frac{\sum_{k=1}^{K}\left(a_{k} + n_{k}\right)}{2} -\frac{1}{2}} \exp\left(-q\gamma\right) d\gamma.
\end{align}
With the help of $\int_{0}^{\infty}x^{v-1}\exp\left(-\mu x\right) dx = \frac{1}{\mu^{v}}\Gamma\left(v\right)$ \cite[Eq. (3.381.4)]{Grad.}, we obtain \eqref{eq:ASEPclosedform}. 

For $P_{b}$ , inserting \eqref{eq:18} in \eqref{eq:ASEP} and using \cite[Eq. (3)]{sharif} we obtain \eqref{eq:pb} and the proof is completed. 

\end{proof}
 For a DF relaying, dual-hop communication system, the ASEP is given by \cite{salhab2} \begin{equation}\label{eq:dh} P_{e}=P_{a}+P_{b}-2 P_{a} P_{b}
\end{equation}

Upon substituting \eqref{eq:ASEPclosedform} \& \eqref{eq:pb} into \eqref{eq:dh}, we get the overall system ASEP.
\subsection{Average Channel Capacity}
The average channel capacity can be expressed as $$
C=\frac{1}{\ln (2)} \int_0^{\infty} \ln (1+\gamma) f_\gamma(\gamma) d \gamma
.$$
With some manipulations the average channel capacity can be expressed in terms of the CDF as \begin{equation}\label{I1}
C=\frac{1}{\ln (2)} \int_0^{\infty} \frac{1-F_\gamma(\gamma)}{1+\gamma} d \gamma
.\end{equation}
For two continuous and unbounded functions $f,g$, we mean by $f \approx g$ is that $\lim _{x \rightarrow \infty}\left|f(x)-g(x)\right|=0 .$ Then we have the following lemma.
\begin{lemma}
Let $F_\gamma(\gamma, \overline{\gamma})$ be the CDF of $\gamma$ with $
\mathbb{E}\left[\gamma^2\right]=\overline{\gamma}^2 E
$ for some constant $E$ and dependent on a parameter $\overline{\gamma}$.
Then, \begin{equation}\label{I2}
C(\overline{\gamma}) \approx \frac{1}{\ln (2)}\int_0^{\overline{\gamma}^2} \frac{1-F_\gamma(\gamma, \overline{\gamma})}{1+\gamma} d \gamma
.\end{equation}
\end{lemma}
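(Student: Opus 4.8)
The plan is to read the symbol $\approx$ with $\overline{\gamma}$ as the variable tending to infinity, and to observe that the right-hand side of \eqref{I2} is nothing but the exact capacity integral \eqref{I1} with its tail over $[\overline{\gamma}^2,\infty)$ deleted. Thus the whole statement reduces to showing that the deleted tail vanishes. First I would write the difference of the two sides as
\begin{equation*}
C(\overline{\gamma})-\frac{1}{\ln(2)}\int_0^{\overline{\gamma}^2}\frac{1-F_\gamma(\gamma,\overline{\gamma})}{1+\gamma}\,d\gamma=\frac{1}{\ln(2)}\int_{\overline{\gamma}^2}^{\infty}\frac{1-F_\gamma(\gamma,\overline{\gamma})}{1+\gamma}\,d\gamma,
\end{equation*}
so the goal becomes proving that this residual integral tends to $0$ as $\overline{\gamma}\to\infty$.

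The main tool I would use is a second-moment Markov bound on the survival function. Since $1-F_\gamma(\gamma,\overline{\gamma})$ is the probability that the underlying nonnegative SNR exceeds $\gamma$, Markov's inequality applied to its square gives $1-F_\gamma(\gamma,\overline{\gamma})\le \mathbb{E}[\gamma^2]/\gamma^2=\overline{\gamma}^2 E/\gamma^2$, using the hypothesis $\mathbb{E}[\gamma^2]=\overline{\gamma}^2 E$. Inserting this bound and using $1+\gamma\ge\gamma$ on the tail, I would dominate the residual by
\begin{equation*}
\frac{1}{\ln(2)}\int_{\overline{\gamma}^2}^{\infty}\frac{\overline{\gamma}^2 E}{\gamma^3}\,d\gamma=\frac{E}{2\ln(2)\,\overline{\gamma}^2},
\end{equation*}
which manifestly goes to $0$ as $\overline{\gamma}\to\infty$. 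Because the integrand is nonnegative, this one-sided estimate already controls $|C(\overline{\gamma})-(\text{truncated integral})|$ and finishes the argument.

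The delicate point — and the step I would flag as the crux — is the matching between how fast the second moment grows and where the cutoff is placed. The Markov numerator contributes a factor $\overline{\gamma}^2$, while integrating $\gamma^{-3}$ from the lower limit $\overline{\gamma}^2$ contributes $(\overline{\gamma}^2)^{-2}$, and it is exactly this combination that yields the decaying $O(1/\overline{\gamma}^2)$ bound. A coarser cutoff, say at $\overline{\gamma}$, would leave an $O(1)$ residual that does not vanish, so I would stress that the choice of truncation point $\overline{\gamma}^2$ is essential and that the assumption of $E$ being a constant independent of $\overline{\gamma}$ is precisely what guarantees the clean decay rate.
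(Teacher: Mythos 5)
Your proposal is correct and follows essentially the same route as the paper: write the discrepancy as the tail integral over $[\overline{\gamma}^2,\infty)$, bound the survival function by Markov's inequality using $\mathbb{E}[\gamma^2]=\overline{\gamma}^2E$, and check that the resulting tail bound vanishes. The only cosmetic difference is that you further bound $1+\gamma\ge\gamma$ to get the explicit $O(\overline{\gamma}^{-2})$ rate, whereas the paper evaluates $\int_{\overline{\gamma}^2}^{\infty}\frac{\overline{\gamma}^2E}{(1+t)t^2}\,dt=\bigl(1-\overline{\gamma}^2\ln(1+\overline{\gamma}^{-2})\bigr)E$ exactly; both yield the same conclusion.
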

\begin{proof}
We will show that $$
\lim _{\overline\gamma \rightarrow \infty}\left|\int_0^{\overline{\gamma}^2} \frac{1-F_\gamma(t, \overline{\gamma})}{1+t} d t-\int_0^{\infty} \frac{1-F_\gamma(t, \overline{\gamma})}{1+t} d t\right|=0
.$$ That is $$
\lim _{\overline\gamma \rightarrow \infty}\left|\int_{\overline{\gamma}^2}^{\infty} \frac{1-F_\gamma(t, \overline{\gamma})}{1+t} d t\right|=0
.$$
By Markov's inequality, $\mathrm{P}(|\gamma| \geq t) \leq \frac{\mathbb{E}\left(|\gamma|^2\right)}{t^2}$ so, \\$1-F_\gamma(t, \overline{\gamma}) \leq\frac{\mathbb{E}\left(|\gamma|^2\right)}{t^2}$ that is $\frac{1-F_\gamma(t, \overline{\gamma})}{1+t} \leq \frac{\mathbb{E}\left[\gamma^2\right]}{(1+t) t^2}$.
But, $$
\mathbb{E}\left[\gamma^2\right]=\overline{\gamma}^2 E
.$$ So, $\frac{1-F_\gamma(t, \overline{\gamma})}{1+t} \leq \frac{\overline{\gamma}^2 E}{(1+t) t^2}$. By taking the integral on both sides we have, \begin{align}
\notag \int_{\overline{\gamma}^2}^{\infty} \frac{1-F_\gamma(t, \overline{\gamma})}{1+t} d t &\leq \int_{\overline{\gamma}^2}^{\infty} \frac{\overline{\gamma}^2 E}{(1+t) t^2} d t \\ \notag &= \left(1-\overline{\gamma}^2 \ln \left(1+\frac{1}{\overline{\gamma}^2}\right)\right)E
.\end{align} But, $\lim _{\overline\gamma \rightarrow \infty} \left(1-\overline{\gamma}^2 \ln \left(1+\frac{1}{\overline{\gamma}^2}\right)\right)E=0$, which completes the proof.
\end{proof}

It is straightforward to show that if $\phi(\overline{\gamma}) \geq \overline{\gamma}^{2}$, then $C(\overline{\gamma}) \approx \frac{1}{\ln (2)}\int_0^{\phi(\overline{\gamma})} \frac{1-F_\gamma(\gamma, \overline{\gamma})}{1+\gamma} d \gamma
.$ This can be useful for a better approximation and faster asymptotic convergence.
\begin{theorem}
The average channel capacity of the G2A channel is given by \begin{align}
\notag C_{1}(\overline{\gamma_a})&=\frac{1}{\ln (2)} \lim _{\gamma \rightarrow \infty}[\ln (1+\gamma)-\sum_{n_1=0}^{\infty} \ldots \sum_{n_k=0}^{\infty}\\ \notag  &\prod_{k=1}^K \frac{(-1)^{n_k}\left(\sqrt{\frac{P_{L, k}}{\overline{\gamma_a} b_k^2}}\right)^{n_k+a_k}}{n_{k} !\left(a_k+n_k\right) \Gamma\left(a_k\right)}\left(\frac{\gamma^{r+1}}{r+1}\right) \\ &\times{}_{2}F_{1}\left(1, r+1 ; r+2 ;-\gamma\right)],\end{align}
\begin{align}
   \notag C_{1}(\overline{\gamma_a}) &\approx \frac{1}{\ln (2)}[\ln (1+\overline{\gamma_a}^{2})-\sum_{n_1=0}^{\infty} \ldots \sum_{n_k=0}^{\infty}\\ \notag &\prod_{k=1}^K \frac{(-1)^{n_k}\left(\sqrt{\frac{P_{L, k}}{\overline{\gamma_a} b_k^2}}\right)^{n_k+a_k}}{n_{k} !\left(a_k+n_k\right) \Gamma\left(a_k\right)}\left(\frac{(\overline{\gamma_a}^{2})^{r+1}}{r+1}\right) \\ &\times{}_{2}F_{1}\left(1, r+1 ; r+2 ;-\overline{\gamma_a}^{2}\right)]
,\end{align}
where $r=\frac{\sum_{k=1}^K\left(n_k+a_k\right)}{2}.$
\end{theorem}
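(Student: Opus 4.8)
The plan is to begin from the CDF form of the capacity in \eqref{I1} and to insert the series CDF \eqref{eq:CDFRISselectionsimple}. Writing $F_{\gamma_a}(\gamma)=\sum_{n_1=0}^{\infty}\cdots\sum_{n_K=0}^{\infty}\big[\prod_{k=1}^{K}c_{k,n_k}\big]\gamma^{r}$, where $r=\tfrac{1}{2}\sum_{k=1}^{K}(a_k+n_k)$ and $c_{k,n_k}$ are the coefficients appearing in \eqref{eq:CDFRISselectionsimple}, the integrand of \eqref{I1} splits as $\frac{1}{1+\gamma}-\frac{F_{\gamma_a}(\gamma)}{1+\gamma}$. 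Since $F_{\gamma_a}\to 1$, each of these two pieces integrates to a divergent quantity over $(0,\infty)$ and only their difference converges; I would therefore carry a finite upper limit $T$ throughout and write $C_{1}(\overline{\gamma_a})=\frac{1}{\ln 2}\lim_{T\to\infty}\big(\int_0^{T}\frac{d\gamma}{1+\gamma}-\int_0^{T}\frac{F_{\gamma_a}(\gamma)}{1+\gamma}\,d\gamma\big)$, keeping both terms under one limit so the logarithmic divergences cancel. The first integral is simply $\ln(1+T)$.

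For the second integral I would interchange the multiple sum with the integration, which is legitimate on the compact interval $[0,T]$ where the series \eqref{eq:CDFRISselectionsimple} converges absolutely and uniformly, reducing everything to the single kernel $\int_0^{T}\frac{\gamma^{r}}{1+\gamma}\,d\gamma$. The key step is the identity $\int_0^{T}\frac{\gamma^{r}}{1+\gamma}\,d\gamma=\frac{T^{r+1}}{r+1}\,{}_2F_1(1,r+1;r+2;-T)$. I would establish it by expanding $\frac{1}{1+\gamma}=\sum_{\ell\ge 0}(-\gamma)^{\ell}$, integrating termwise to obtain $\sum_{\ell\ge 0}\frac{(-1)^{\ell}T^{r+\ell+1}}{r+\ell+1}$, factoring out $\frac{T^{r+1}}{r+1}$, and recognizing $\frac{r+1}{r+\ell+1}=\frac{(r+1)_\ell}{(r+2)_\ell}$ together with $(1)_\ell=\ell!$ to read off the Gauss series ${}_2F_1(1,r+1;r+2;-T)$. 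Because the termwise expansion is valid only for $T<1$, the identity is extended to all $T>0$ by analytic continuation, both sides being analytic in $T$ on $(0,\infty)$. Substituting back and renaming the upper limit $T\to\gamma$ (matching the statement's dummy) gives the exact expression for $C_{1}(\overline{\gamma_a})$.

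The stated approximation then follows from the preceding Lemma, which replaces the upper limit $\infty$ by $\overline{\gamma_a}^{2}$. To invoke it I must verify its hypothesis $\mathbb{E}[\gamma_a^2]=\overline{\gamma_a}^{2}E$ for a constant $E$ independent of $\overline{\gamma_a}$. This holds because \eqref{eq:gamma} and \eqref{eq:RISselection} give $\gamma_a=\overline{\gamma_a}\,\max_{k}\frac{Z_k^{2}}{P_{L,k}}$, so $\mathbb{E}[\gamma_a^2]=\overline{\gamma_a}^{2}\,\mathbb{E}\big[(\max_{k}Z_k^{2}/P_{L,k})^{2}\big]$ and the remaining expectation is a finite constant not depending on $\overline{\gamma_a}$. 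Evaluating the finite-$T$ expression at $T=\overline{\gamma_a}^{2}$ and dropping the outer limit then yields $\ln(1+\overline{\gamma_a}^{2})$ for the first integral and the hypergeometric term at argument $-\overline{\gamma_a}^{2}$ for the second, which is exactly the claimed approximation.

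I expect the main obstacle to be handling the two coupled divergences under a single limit: each per-term kernel $\int_0^{T}\frac{\gamma^{r}}{1+\gamma}\,d\gamma$ grows like $T^{r}$ as $T\to\infty$, and the convergence of $C_{1}$ rests on the delicate sign-alternating cancellation already present in the CDF series \eqref{eq:CDFRISselectionsimple}. Consequently the limit $T\to\infty$ cannot be moved inside the infinite sum, which is precisely why the exact result must be stated with the outer $\lim_{\gamma\to\infty}$ rather than evaluated termwise. A secondary technical point is verifying the second-moment scaling in the Lemma's hypothesis, which formally requires controlling the order statistics of the per-RIS SNRs $\gamma_k$ entering the maximum in \eqref{eq:RISselection}.
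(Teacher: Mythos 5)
Your proposal is correct and follows essentially the same route as the paper: substitute the series CDF of $\gamma_a$ into \eqref{I1} and \eqref{I2}, reduce each term to the kernel $\int_0^{T}\frac{\gamma^{r}}{1+\gamma}\,d\gamma=\frac{T^{r+1}}{r+1}\,{}_2F_1(1,r+1;r+2;-T)$, and invoke the Lemma after checking $\mathbb{E}[\gamma_a^2]=\overline{\gamma_a}^{2}E$. The only differences are cosmetic — you derive the hypergeometric identity by termwise expansion and analytic continuation where the paper simply cites \cite[Eq.\ (3.195.5)]{Grad.}, and you are more explicit than the paper about keeping the two divergent pieces under a single limit so their logarithmic growth cancels.
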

\begin{proof}
We have $
\gamma_{a}=\max \left(\gamma_k\right)=\max \left(\overline{\gamma_{a}} Z_k\right)=\overline{\gamma_{a}} \max \left(Z_k\right), \mathbb{E}\left[\gamma_{a}^2\right]=\overline{\gamma_{a}}^2 E
.$Then, by substituting the CDF of the first hop directly in the integrals \eqref{I1}\&\eqref{I2} and using $\int_0^u \frac{x^{\mu-1} d x}{1+\beta x}=\frac{u^\mu}{\mu}{ }_2 F_1(1, \mu ; 1+\mu ;-\beta u)$\cite[Eq. (3.195.5)]{Grad.}, the proof is complete.
\end{proof}
We can simplify the previous expression further.
\begin{corollary}
If $\sum_{k=1}^K a_k$ is not an integer, then the average channel capacity of the G2A channel is given by: \begin{align}
\notag C_{1}(\overline{\gamma_a})&=\frac{1}{\ln (2)} \lim _{\gamma \rightarrow \infty}[\ln (1+\gamma)-\sum_{n_1=0}^{\infty} \ldots \sum_{n_k=0}^{\infty}\\ \notag &\prod_{k=1}^K \frac{(-1)^{n_k}\left(\sqrt{\frac{P_{L, k}}{\overline{\gamma_a} b_k^2}}\right)^{n_k+a_k}}{n_{k} !\left(a_k+n_k\right) \Gamma\left(a_k\right)}(\sum_{n_{k+1}=0}^{\infty} [\frac{(-1)^{n_{k+1}}}{r-n_{k+1}} \gamma^{r-n_{k+1}}]\\ &-\pi \csc (\pi r))].\end{align} 
\begin{align}
   \notag C_{1}(\overline{\gamma_a}) &\approx \frac{1}{\ln (2)}[\ln (1+\overline{\gamma_a}^{2})-\sum_{n_1=0}^{\infty} \ldots \sum_{n_k=0}^{\infty}\\ \notag &\prod_{k=1}^K \frac{(-1)^{n_k}\left(\sqrt{\frac{P_{L, k}}{\overline{\gamma_a} b_k^2}}\right)^{n_k+a_k}}{n_{k} !\left(a_k+n_k\right) \Gamma\left(a_k\right)}(\sum_{n_{k+1}=0}^{\infty} [\frac{(-1)^{n_{k+1}}}{r-n_{k+1}} \overline{\gamma_a}^{r-n_{k+1}}]\\ &-\pi \csc (\pi r))]
.\end{align} 
where $r=\frac{\sum_{k=1}^K\left(n_k+a_k\right)}{2}.$
\end{corollary}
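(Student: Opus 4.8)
The plan is to start from the expression for $C_{1}(\overline{\gamma_a})$ established in the previous theorem, namely the one involving the Gauss hypergeometric function ${}_{2}F_{1}\left(1, r+1 ; r+2 ;-\gamma\right)$ inside the limit. The entire corollary amounts to rewriting this $ {}_{2}F_{1} $ term in a more explicit series form under the extra hypothesis that $\sum_{k=1}^{K} a_{k}$ is not an integer. Since the product over $k$ and the prefactor $\frac{1}{\ln(2)}$ are carried over unchanged, the whole task reduces to a single identity: showing that
\begin{equation}\label{eq:corgoal}
\frac{\gamma^{r+1}}{r+1}\,{}_{2}F_{1}\left(1, r+1 ; r+2 ;-\gamma\right)
=\sum_{n_{k+1}=0}^{\infty} \frac{(-1)^{n_{k+1}}}{r-n_{k+1}}\,\gamma^{r-n_{k+1}}-\pi \csc(\pi r),
\end{equation}
at least in the sense needed after taking $\gamma\to\infty$ (and with $\overline{\gamma_a}^{2}$ replacing $\gamma$ in the approximate version). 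So first I would isolate this $ {}_{2}F_{1} $ factor and treat the rest of the formula as an inert multiplier.

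Second, I would invoke the linear transformation / analytic continuation formula for the Gauss hypergeometric function that maps the argument $-\gamma$ to $-1/\gamma$, i.e. the standard connection formula expressing ${}_{2}F_{1}(a,b;c;z)$ for large $|z|$ as a combination of two series in $1/z$ with coefficients involving ratios of Gamma functions (Gradshteyn--Ryzhik 9.132 or Abramowitz--Stegun 15.3.7). Applied to ${}_{2}F_{1}\left(1, r+1 ; r+2 ;-\gamma\right)$, this produces one piece that is a descending power series in $\gamma$ — which, after multiplying by $\frac{\gamma^{r+1}}{r+1}$, yields exactly the $\sum_{n_{k+1}} \frac{(-1)^{n_{k+1}}}{r-n_{k+1}}\gamma^{r-n_{k+1}}$ series — and a second piece carrying a factor $\gamma^{-(r+1)}$ times a Gamma-function prefactor, which after multiplication by $\frac{\gamma^{r+1}}{r+1}$ collapses to the constant $-\pi\csc(\pi r)$. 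The appearance of $\csc(\pi r)$ is the tell-tale sign that the reflection formula $\Gamma(r)\Gamma(1-r)=\pi\csc(\pi r)$ is being used to combine the two Gamma-function ratios into a single cosecant; verifying that the constant comes out to precisely $-\pi\csc(\pi r)$ is where the non-integer hypothesis on $\sum_{k=1}^{K} a_{k}$ (equivalently on $r$ when the $n_{k}$ are summed) becomes essential, since it guarantees the connection formula is valid, the $\csc(\pi r)$ is finite, and no logarithmic (confluent) terms arise.

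Third, I would substitute the rewritten expression back under the limit $\lim_{\gamma\to\infty}[\ln(1+\gamma)-\cdots]$ and confirm that the formal manipulation is legitimate termwise, then immediately transcribe the approximate version by replacing $\gamma$ with $\overline{\gamma_a}^{2}$ exactly as licensed by the Lemma (the remark following it allows $\phi(\overline{\gamma})=\overline{\gamma}^{2}$). The main obstacle I anticipate is bookkeeping rather than conceptual: correctly matching the two branches of the connection formula to the two terms in \eqref{eq:corgoal}, keeping the signs and the $(-1)^{n_{k+1}}$ phases consistent, and checking that the Gamma-ratio prefactor of the second branch reduces cleanly to $-\pi\csc(\pi r)$ via the reflection formula. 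A secondary subtlety is justifying the interchange of the limit with the infinite summations and ensuring convergence of the descending series $\sum_{n_{k+1}} \frac{(-1)^{n_{k+1}}}{r-n_{k+1}}\gamma^{r-n_{k+1}}$, but since each term of the outer product already decays and $r$ is non-integer (so no denominator vanishes), this should follow from the same conditions that validate the hypergeometric continuation.
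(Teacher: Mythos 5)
Your proposal is correct, but it follows a genuinely different route from the paper's. The paper proves the underlying identity
\begin{equation*}
\int_0^t \frac{x^r}{1+x}\,dx=\sum_{k=0}^{\infty}\frac{(-1)^k}{r-k}\,t^{r-k}-\pi\csc(\pi r),\qquad t>1,\ r\notin\mathbb{Z},
\end{equation*}
by elementary means: it splits the integral at $x=1$, expands $\tfrac{1}{1+x}$ as a geometric series in $x$ on $[0,1]$ and in $1/x$ on $[1,t]$, integrates termwise, and then recognizes the leftover constant $\sum_{k\ge 0}\bigl[\tfrac{(-1)^k}{r+k+1}-\tfrac{(-1)^k}{r-k}\bigr]$ as $-\pi\csc(\pi r)$ via the partial-fraction expansion of the cosecant (Gradshteyn--Ryzhik 1.422.3). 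You instead start from the $\tfrac{\gamma^{r+1}}{r+1}\,{}_2F_1(1,r+1;r+2;-\gamma)$ form of Theorem~2 (which is the same integral, by G--R 3.195.5) and apply the large-argument connection formula for ${}_2F_1$ together with the Gamma reflection formula; with $a=1$, $b=r+1$, $c=r+2$ the first branch reproduces the descending series $\sum_{n}\tfrac{(-1)^{n}}{r-n}\gamma^{r-n}$ and the second branch carries the prefactor $\Gamma(r+2)\Gamma(-r)\gamma^{-(r+1)}$, which indeed collapses to $-(r+1)\pi\csc(\pi r)\gamma^{-(r+1)}$ and hence to the constant $-\pi\csc(\pi r)$ after multiplication by $\tfrac{\gamma^{r+1}}{r+1}$. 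Both arguments hinge on the same two facts --- validity only for $r$ non-integer (equivalently $\sum_k a_k\notin\mathbb{Z}$, since $\sum_k n_k$ is an integer) and convergence of the descending series for $\gamma>1$ --- and you correctly flag both. What the paper's route buys is self-containment (no special-function machinery beyond a known cosecant series); what yours buys is brevity and a cleaner explanation of where the $\csc(\pi r)$ "comes from," namely the reflection formula hidden in the analytic continuation.
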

\begin{proof}
See Appendix A.
\end{proof}
The assumption in this corollary is reasonable since $\sum_{k=1}^K a_k$ is almost always not an integer.
\begin{theorem}
The average channel capacity of the A2G channel is given by:\begin{align}
\notag C_{2}\left(\overline{\gamma_b}\right)&=\frac{1}{\ln (2)} \lim _{\gamma \rightarrow \infty}\small[\ln (1+\gamma) R_1\\ &+\sum_{k=0}^{\infty} \sum_{i=1}^{k+1} \frac{(-1)^i e^{-K_0} L_k\left(K_0\right)\left(\frac{1+K_0}{\overline{\gamma_b}}\right)^{k+1}}{\Gamma(k+2) i} \gamma^i\small]
,\end{align}
\begin{align}
\notag C_{2}\left(\overline{\gamma_b}\right) &\approx \frac{1}{\ln (2)}\small[\ln \left(1+{\overline{\gamma_b}}^2\right) R_2\\&+\sum_{k=0}^{\infty} \sum_{i=1}^{k+1} \frac{(-1)^i e^{-K_0} L_k\left(K_0\right)\left(1+K_0\right)^{k+1}}{\Gamma(k+2) i} \overline{\gamma}_b^{2 i-k-1}\small]
.\end{align}
where $R_1=1+\sum_{k=0}^{\infty} \frac{e^{-K_0} L_k\left(K_0\right)\left(\frac{1+K_0}{\overline{\gamma_b}}\right)^{k+1}}{\Gamma(k+2)}$ and $R_2=1+\frac{e^{-K_0}\left(1+K_0\right)}{2 \overline{\gamma_b}}.$
\end{theorem}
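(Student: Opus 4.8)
The plan is to mirror the proof of the preceding theorem (the G2A capacity), now feeding the Rician CDF into the capacity-by-CDF formula \eqref{I1}. First I would write $C_2(\overline{\gamma_b})=\frac{1}{\ln 2}\int_0^\infty \frac{1-F_{\gamma_b}(t)}{1+t}\,dt$ and replace $F_{\gamma_b}$ by the same Laguerre-series representation used in the asymptotic-OP corollary, $F_{\gamma_b}(t)=e^{-K_0}\sum_{k=0}^\infty (-1)^k \frac{L_k(K_0)}{\Gamma(k+2)}\big(\tfrac{(1+K_0)L}{\overline{\gamma_b}}\big)^{k+1} t^{k+1}$ (cf. \cite[Eq.~(8)]{Andras}). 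Splitting the numerator $1-F_{\gamma_b}$ and interchanging the sum with the integral (treated formally, exactly as in that proof), the computation reduces to the single family $\int_0^\gamma \frac{t^{k+1}}{1+t}\,dt$, with the upper limit $\gamma$ to be sent to infinity at the end.

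The computational heart is polynomial long division, $\frac{t^{k+1}}{1+t}=\sum_{j=0}^{k}(-1)^j t^{k-j}+\frac{(-1)^{k+1}}{1+t}$, giving $\int_0^\gamma \frac{t^{k+1}}{1+t}\,dt=\sum_{i=1}^{k+1}\frac{(-1)^{k+1-i}}{i}\gamma^i+(-1)^{k+1}\ln(1+\gamma)$ after the reindexing $i=k+1-j$. The $\ln(1+\gamma)$ pieces from every $k$, together with the $\ln(1+\gamma)$ produced by $\int_0^\gamma \frac{dt}{1+t}$ from the constant $1$ in $1-F_{\gamma_b}$, assemble into the coefficient $R_1$; here the sign identity $(-1)^k(-1)^{k+1}=-1$ is precisely what turns the alternating series into the all-positive sum defining $R_1$. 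The surviving polynomial pieces collapse into the double sum of the statement, the net sign simplifying through $(-1)^k(-1)^{k+1-i}=-(-1)^i$. Letting $\gamma\to\infty$ then yields the exact expression.

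For the approximation I would invoke Lemma~1. Since $\gamma_b\propto|\chi|^2$ with $\chi$ Rician, the second moment is of the required form $\mathbb{E}[\gamma_b^2]=\overline{\gamma_b}^2 E$ with a finite constant $E$ depending only on $K_0$, so the hypothesis of Lemma~1 holds and \eqref{I2} applies, i.e. the upper limit $\infty$ may be replaced by $\overline{\gamma_b}^2$. Substituting $\gamma=\overline{\gamma_b}^2$ into the double sum and using $(\overline{\gamma_b}^2)^i\,\overline{\gamma_b}^{-(k+1)}=\overline{\gamma_b}^{2i-k-1}$ reproduces the stated approximate double sum, and the logarithmic term becomes $\ln(1+\overline{\gamma_b}^2)\,R_1$; keeping only the dominant ($k=0$) contribution of the series inside $R_1$ as $\overline{\gamma_b}\to\infty$ produces $R_2=1+\frac{e^{-K_0}(1+K_0)}{2\overline{\gamma_b}}$.

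The main obstacle is not any single integral but the interchange of the infinite Laguerre summation with the integration and the attendant limit: term by term the antiderivative splits into a logarithm scaled by $R_1\neq 1$ and a genuinely growing polynomial in $\gamma$, and only after summing over all $k$ do these apparently divergent pieces recombine into the finite value of the (convergent) capacity integral. Making this step rigorous would need either a dominated-convergence estimate on the tail of the Laguerre series or the relevant Laguerre generating-function identity. I also expect the exact constant in $R_2$ — in particular the factor $\tfrac12$ — to require a careful leading-order asymptotic of $R_1$ rather than naive truncation, and the two reindexing/sign steps are the other place where slips are easy to make.
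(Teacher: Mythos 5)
Your proposal matches the paper's proof essentially step for step: the paper likewise substitutes the Rician CDF into \eqref{I1} and \eqref{I2}, expands the Marcum $Q$-function via the Laguerre series of \cite[Eq.~(8)]{Andras}, and invokes exactly the integral identity $\int_0^t \frac{x^{k+1}}{1+x}\,dx=(-1)^{k-1}\bigl[\sum_{i=1}^{k+1}\frac{(-1)^i t^i}{i}\bigr]+(-1)^{k+1}\ln(t+1)$ that you rederive by polynomial long division, with the same sign bookkeeping. Your observation that the naive $k=0$ truncation of $R_1$ gives $e^{-K_0}(1+K_0)/\overline{\gamma_b}$ rather than the stated $e^{-K_0}(1+K_0)/(2\overline{\gamma_b})$ is fair; the paper's proof offers no derivation of $R_2$, so your account is, if anything, more complete on that point.
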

\begin{proof}
We have $
 \mathbb{E}\left[\gamma_{b}^2\right]=\overline{\gamma_{b}}^2 E
.$ Then, by substituting the CDF of the second hop directly in the integrals \eqref{I1}\&\eqref{I2} and using $Q_v(a, b)=1-\sum_{n \geq 0}(-1)^n e^{-\frac{a^2}{2}} \frac{L_n^{(v-1)}\left(\frac{a^2}{2}\right)}{\Gamma(v+n+1)}\left(\frac{b^2}{2}\right)^{n+v}$\cite[Eq. (8)]{Andras}, with\\$\int_0^t \frac{x^{k+1}}{1+x} d \gamma=(-1)^{k-1}\left[\sum_{i=1}^{k+1} \frac{(-1)^i t^i}{i}\right]+(-1)^{k+1} \ln (t+1)$ the proof is complete.
\end{proof}

The average channel capacity of the system is given by \cite{salhab2}
\begin{equation}\label{eq:totalc}
C=\frac{1}{2} \min [ C_{1}(\overline{\gamma_a}) , C_{2}(\overline{\gamma_b})].
\end{equation}
Upon substituting the expressions in the previous results into \eqref{eq:totalc}, we get the overall system average channel capacity.
\section{OPTIMIZATION}\label{optimization}
In this section, we solve the following optimization problem: 

Given a total transmit power budget $E_{T}$, what is the optimal transmit power for the first and second channels that will minimize the total outage probability?
 More precisely our problem can be formulated as follows:
 $$
\begin{array}{ll}
{\operatorname{minimize}} & P_\text{out}^{\infty}(E_{s},E_{u}), \\
\text { subject to } & E_{s}+E_{u} \leq E_{T},
\end{array}$$
 where $P_\text{out}^{\infty}$ is the asymptotic outage probability in \eqref{outasym}.
Set \begin{small}
\\$M\triangleq \sum_{k=1}^K a_k,$  
  $c_{1}\triangleq N_{0}^{M/2}\prod_{k=1}^{K}\left(\frac{b_{k}^{2}}{\gamma_\text{out} P_{L, k} \Gamma\left(a_{k}\right)^{-\frac{2}{a_{k}}}} \right)^{-\frac{a_{k}}{2}}$, $c_{2}\triangleq N_{u}e^{-K_{0}}\left(1+K_{0}\right) \gamma_\text{out}L$ and $d\triangleq [\frac{2c_{2}}{Mc_{1}}]^{\frac{-1}{M/2+1}}$ we get
  \begin{equation}
      P_\text{out}^{\infty}(E_{s},E_{u})=c_{1}E_{s}^{-M/2}+c_{2}E_{u}^{-1}.
  \end{equation} \end{small}
This function is convex. To find the minimum outage probability, let us define the Lagrangian $$\mathcal{L}\left(E_{S}, E_{u}\right)=c_{1}E_{s}^{-M/2}+c_{2}E_{u}^{-1}+\mu(E_{s}+E_{u}-E_{T}).$$
Where $\mu$ is the Lagrangian parameter.

By setting the partial derivatives of $P_\text{out}^{\infty}(E_{s},E_{u})$ to zero with respect to $E_{s},E_{u}$, and $\mu,$ we get the following system of equations:

\begin{align*}
  &E_s = \,\left(\frac{2 \mu}{Mc_{1}}\right)^{-\frac{1}{\frac{M}{2}+1}}, \\ 
  &E_u = \sqrt{\frac{c_2}{\mu}}, \\ 
  &E_s+E_u = \,E_T.
\end{align*}

This leads to the equation $E_s=d(E_T-E_s)^{\frac{4}{M+2}},$ which can be solved numerically. 

Therefore, if $x^{*}$ solves $x=d(E_T-x)^{\frac{4}{M+2}},$ then our desired solution is $E_s=x^{*}, E_u=E_{T}-x^{*}.$
\section{NUMERICAL RESULTS}
\label{Numerical Results}
In this section, we present numerical results corresponding to the considered multiple RISs-UAV system. We also validate the theoretical analysis using Monte-Carlo simulations. 

Unless otherwise stated, in all presented illustrations, the carrier frequency is assumed to be $f_{c} = 2$ GHz, the gains of the RISs in the first and second hops are, respectively given as $ G_{k,1} = G_{k,2} = 5$ dBi, and $\gamma_\text{out} = 0$ dB.
\begin{figure}[]
	\includegraphics[scale=0.4]{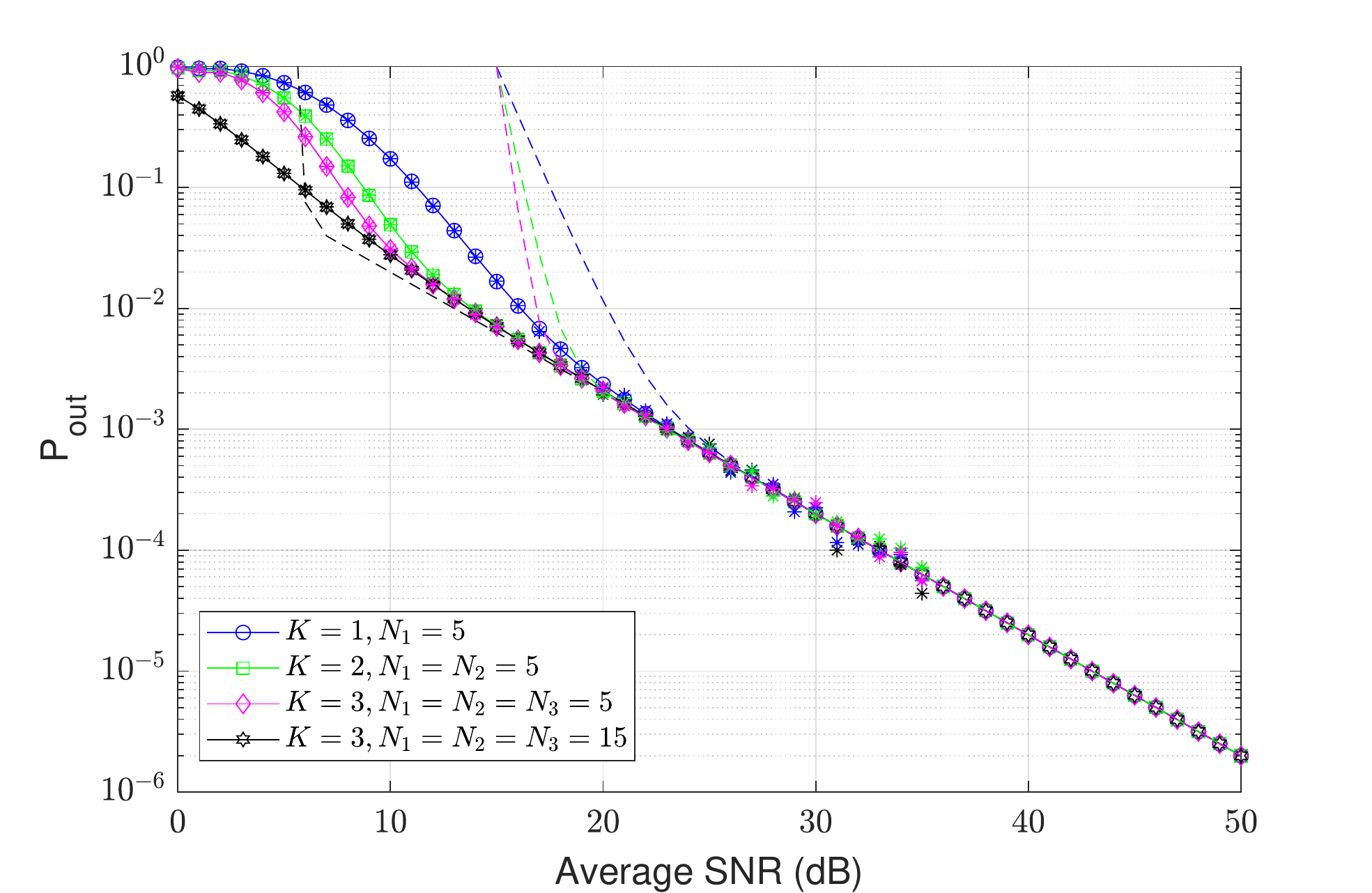}
	\centering
	\caption{The outage probability versus average SNR with different numbers of RIS $K\in\left\{1,2, 3\right\}$ and reflecting RIS elements $N_{1}, N_{2}, N_{3}\in\left\{5, 15\right\}$, $\left(m_{k,1}, m_{k,2}\right) = \left(1, 1\right)$, $\left(\Omega_{k,1}, \Omega_{k,2}\right) = \left(1,1\right),$ and $K_{0}=4.77 
 $ dB.}	
	\label{a}
\end{figure}

\begin{figure}[]
	\includegraphics[scale=0.4]{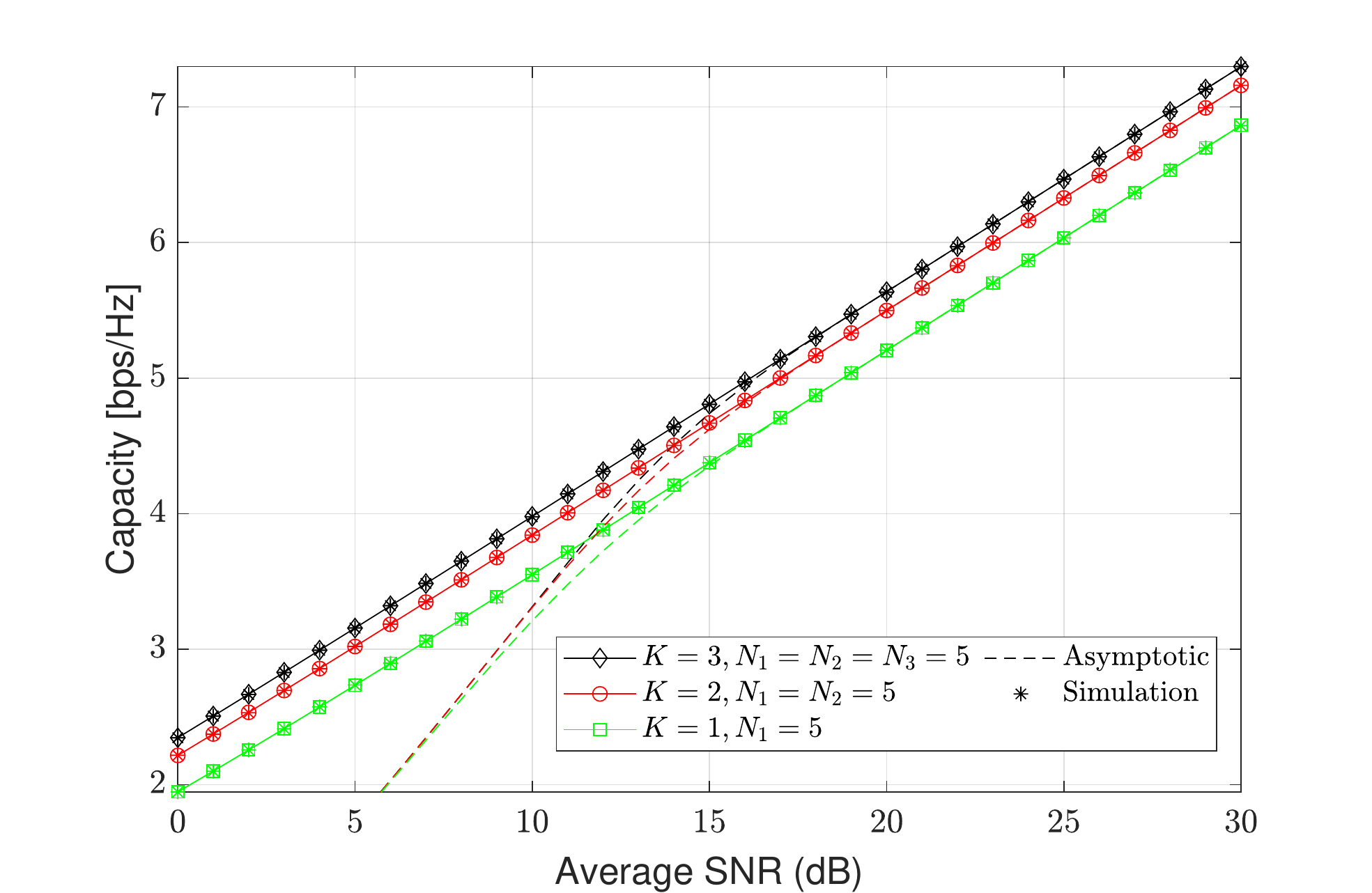}
	\centering
	\caption{Average channel capacity versus average SNR with different numbers of RIS $K\in\left\{1,2, 3\right\}$ and reflecting RIS elements $N_{1}, N_{2}, N_{3}\in\left\{5, 15\right\}$, $\left(m_{k,1}, m_{k,2}\right) = \left(1, 1\right)$, $\left(\Omega_{k,1}, \Omega_{k,2}\right) = \left(1,1\right),$ and $K_{0}=4.77 
 $ dB.}	
	\label{b}
\end{figure}

In Figures 1 and 2, we validate the analytical results by comparing them with simulations. In these two figures, the outage probability and average channel capacity are plotted against the average SNR. We can observe that simulations (asterisks) match the analytically derived results well. Furthermore, we can see that the asymptotic expressions, which are plotted as dashed lines, converge to the exact expressions at high SNR values. For the outage probability, it is obvious that all curves converge asymptotically to the same line. This happens when the second channel's outage probability, which decreases as $1/{\gamma}$ dominates the system performance. However, we see no such convergence at high SNR in the average channel capacity. It is clear from these two figures that a higher number of RISs and reflecting elements enhance the system performance, as expected.

In the next figures, we consider three RISs placed on a line perpendicular to the line between the source and the destination and placed 40 m away from the source.

\begin{figure}[]
	\includegraphics[scale=0.4]{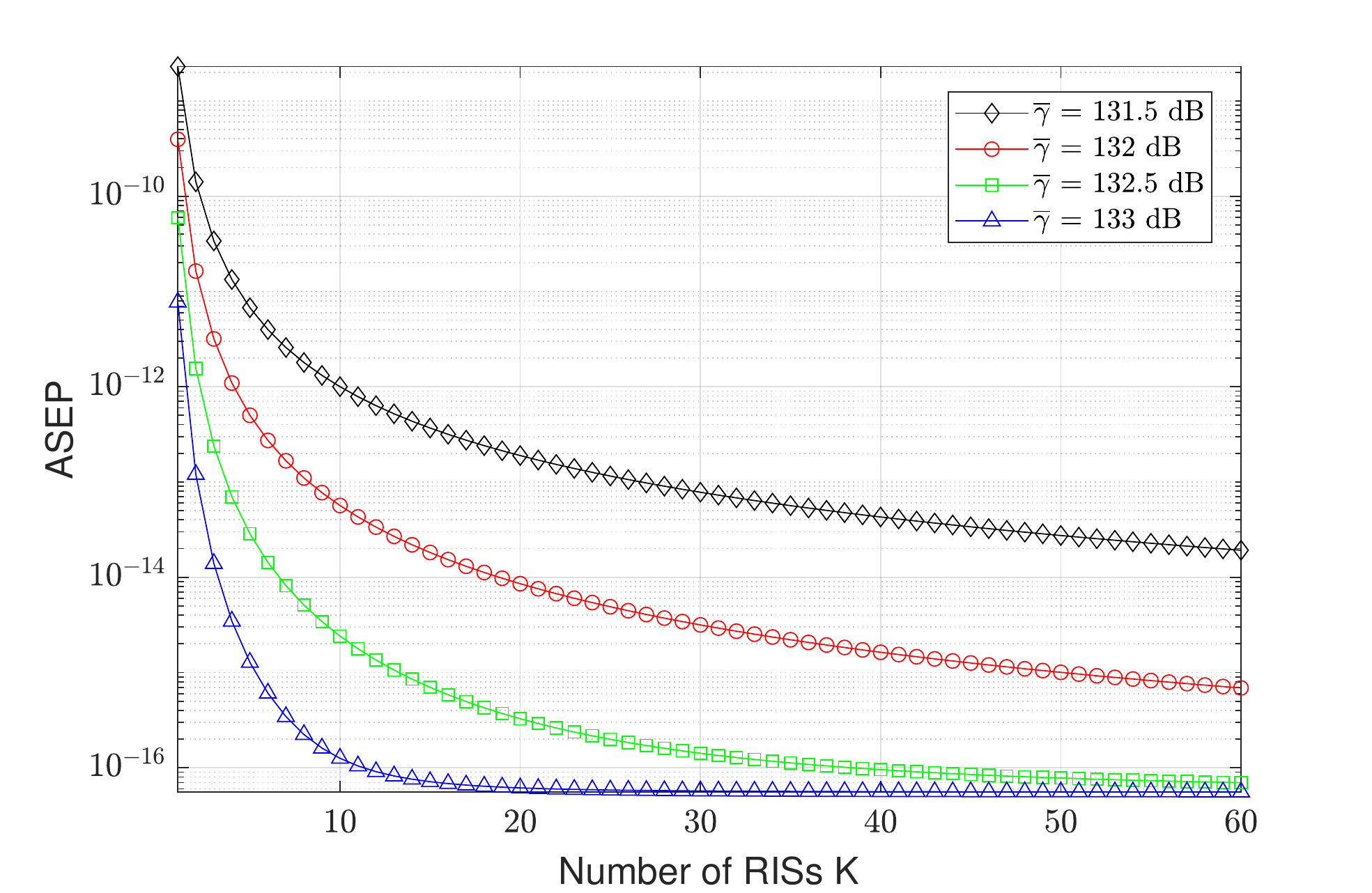}
	\centering
	\caption{Average symbol error probability versus the number of RISs for different average SNRs, 
 $\overline{\gamma_{a}} = \overline{\gamma_{b}} = \overline{\gamma},$ $K = 3,$ $N_{1} = N_{2}= N_{3} = 5$, $\left(m_{k,1}, m_{k,2}\right) = \left(1, 1\right)$, $\left(\Omega_{k,1}, \Omega_{k,2}\right) = \left(1, 1\right)$, and $K_{0}=4.77$ dB.}	
	\label{c}
\end{figure}

\begin{figure}[]
	\includegraphics[scale=0.4]{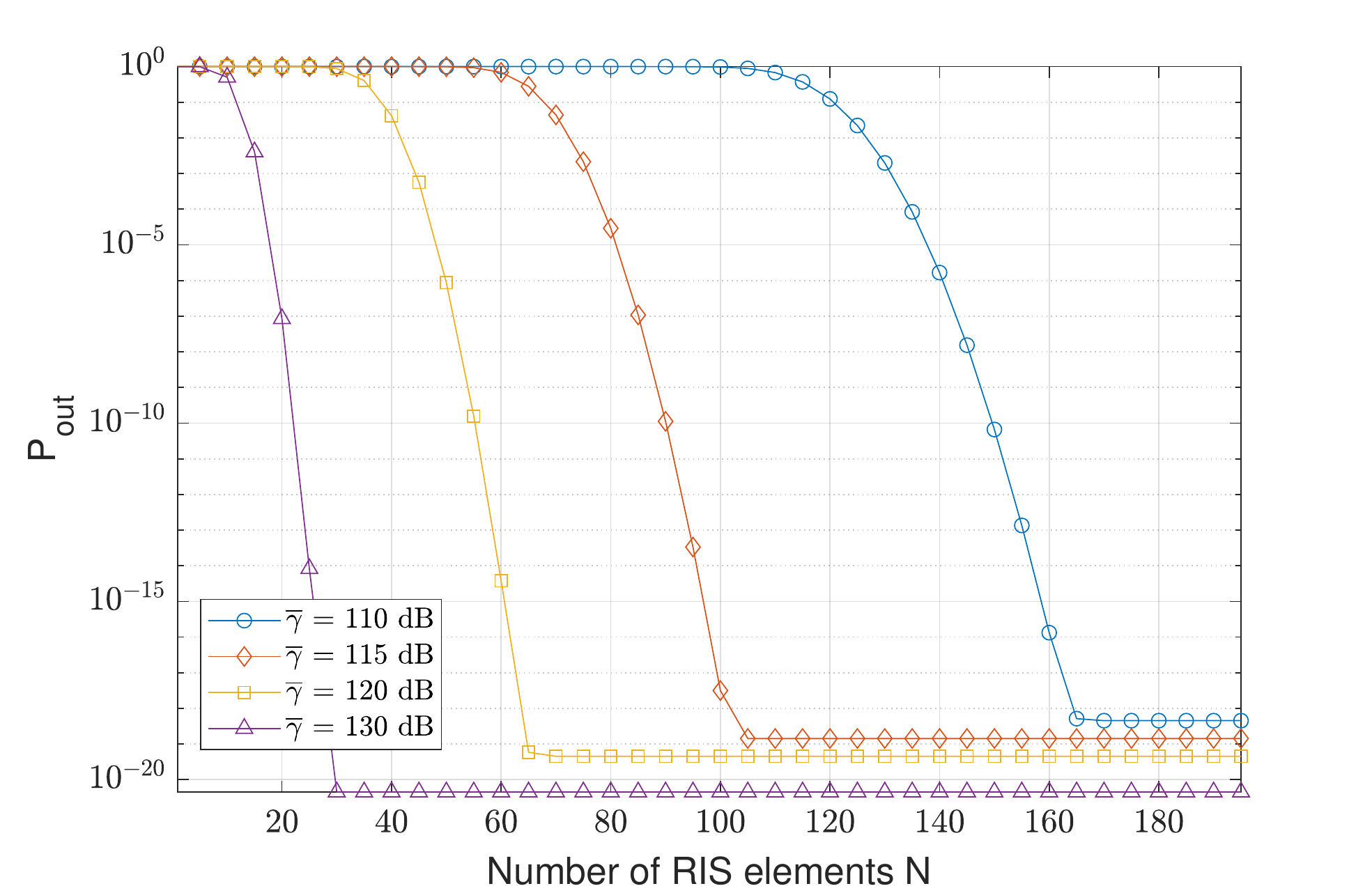}
	\centering
	\caption{Outage probability versus number of elements for different average SNRs, $\overline{\gamma_{a}} = \overline{\gamma_{b}} = \overline{\gamma},$ $K = 3,$ $N_{1} = N_{2}= N_{3} = 5$, $\left(m_{k,1}, m_{k,2}\right) = \left(1, 1\right)$, $\left(\Omega_{k,1}, \Omega_{k,2}\right) = \left(1, 1\right)$, and $K_{0}=4.77$ dB.}	
	\label{d}
\end{figure}
To further investigate the effect of the number of RISs $K$ and the number of their elements $N$ on the system, Figures 3 and 4 plot the average symbol error probability and outage probability against the number of RISs $K$ and the number of elements $N$, respectively. The figures show that the higher the number of RISs or reflecting elements, the better the performance. However, we can see that after a certain amount of RISs or reflecting elements, the average symbol error probability outage probability will be unaffected by changing the two parameters. This is because increasing $K$ and $N$ enhances the performance of the first hop but not the second hop, and so after some numbers, $K_{th}$ and $N_{th}$, the second hop will dominate the system. The numbers $K_{th}$ and $N_{th}$, as well as the outage limit, depend on the second hop and other system parameters such as $\overline{\gamma}$.

\begin{figure}[]
	\includegraphics[scale=0.4]{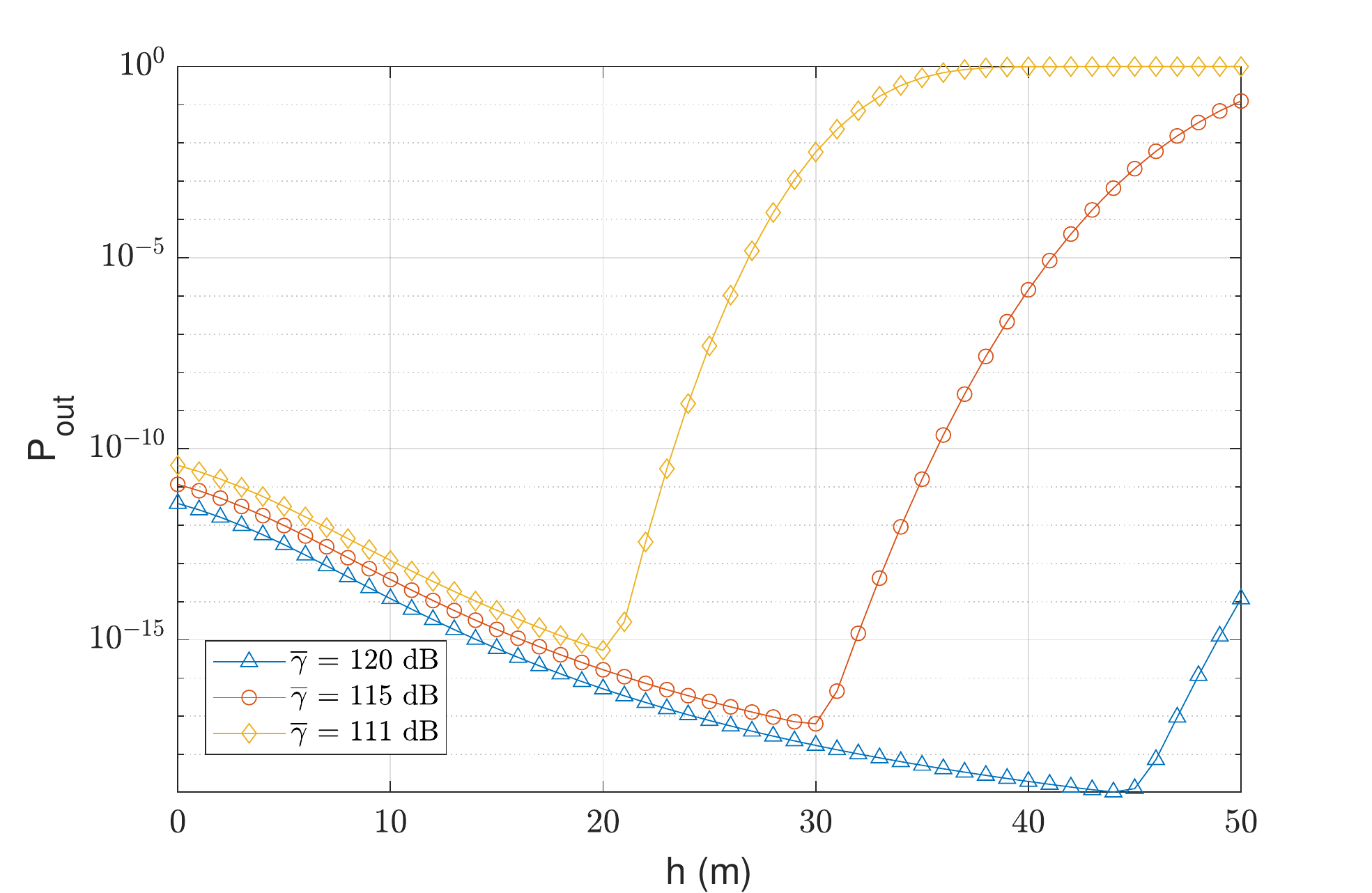}
	\centering
	\caption{The outage probability vs the UAV's height for different average SNRs, $\overline{\gamma_{a}} = \overline{\gamma_{b}} = \overline{\gamma},$ $K = 3,$ $N_{1} = N_{2}= N_{3} = 5$, $\left(m_{k,1}, m_{k,2}\right) = \left(1, 1\right)$, $\left(\Omega_{k,1}, \Omega_{k,2}\right) = \left(1, 1\right)$, and $K_{0}=4.77$ dB.}	
	\label{e}
\end{figure}
 Figure 5 demonstrates the effect of the UAV's height on the system. The outage probability is plotted against the height of the UAV, assuming an urban environment. It is clear that the outage probability decreases at first as the height increases. However, the outage probability starts increasing at some point, creating an optimal height. To see why we need to analyze the behavior of the first and the second hops separately. Initially, as the height increases, the Rician factor increases, hence, enhancing the second hop channel. But, for large elevation, the Rician factor will reach a limit, mainly $a_{2} \cdot e^{b_{2} \frac{\pi}{2}}$ and will not increase further. So the path-loss effect will dominate, and the second hop outage probability will increase. However, for the first hop, the outage probability increases with the height indefinitely. Thus, the optimal height of the system will be either at the point where the dominance shifts to the first hop or at the optimal height of the second hop. In these figures, the optimal height is at the shifts. Usually, the optimal height occurs at the point where the dominance shifts. This happens when the optimal point is lower than the second hop's optimal height, and the outage probability at this point is higher.
\begin{figure}[]
	\includegraphics[scale=0.4]{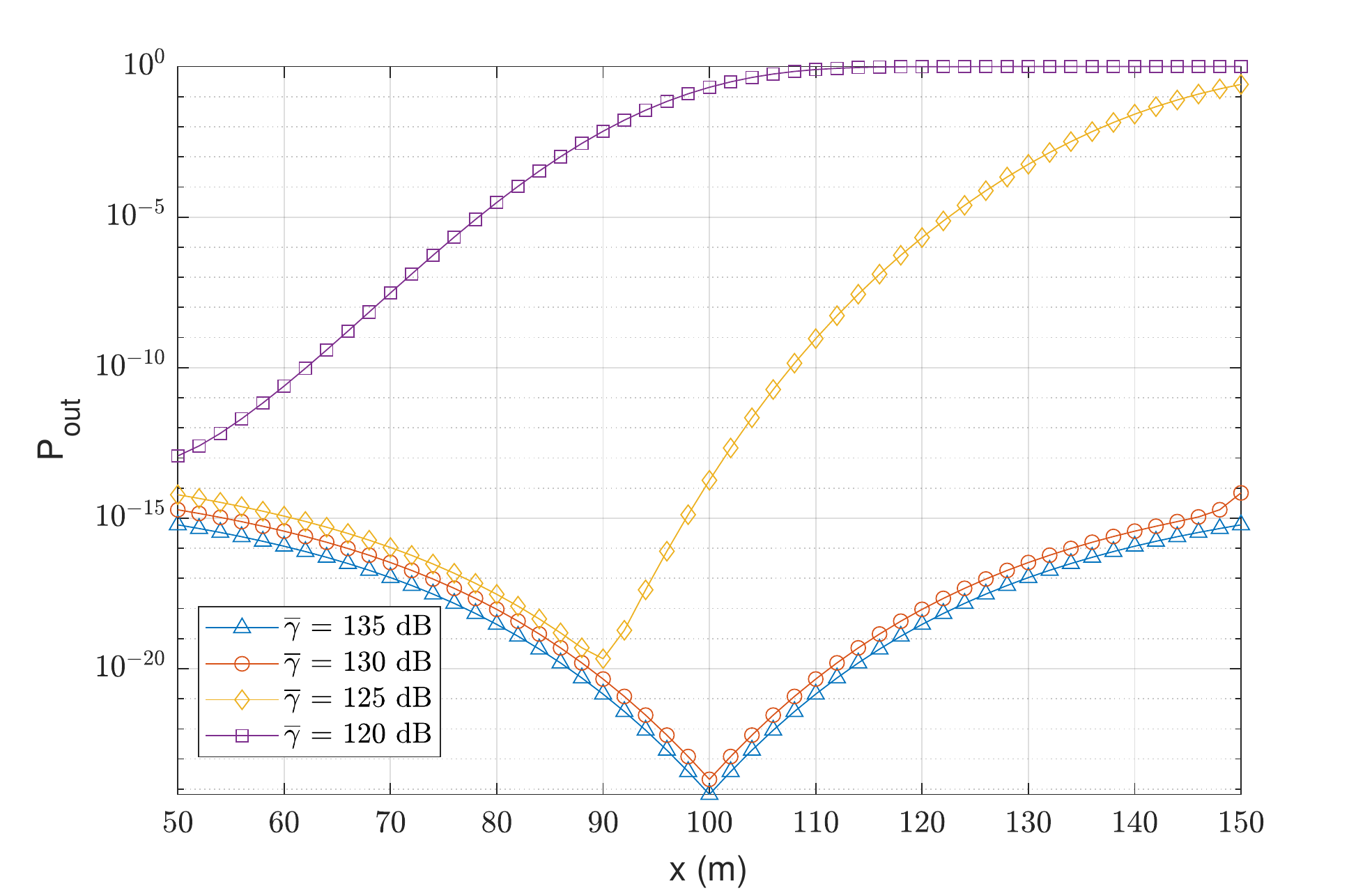}
	\centering
	\caption{The outage probability vs the UAV's horizontal position for different average SNRs, $\overline{\gamma_{a}} = \overline{\gamma_{b}} = \overline{\gamma},$ $K = 3,$ $N_{1} = N_{2}= N_{3} = 5$, $\left(m_{k,1}, m_{k,2}\right) = \left(1, 1\right)$, $\left(\Omega_{k,1}, \Omega_{k,2}\right) = \left(1, 1\right)$, and $K_{0}=4.77$ dB.}		
	\label{f}
\end{figure}
Figure 6 shows the behavior of the system when the UAV moves horizontally. Here, the destination is placed 100 m away from the source, and the UAV is at 50 m height. The outage probability is plotted against the horizontal distance of the UAV from the source. We can see that there is an optimal point where the outage is at its lowest. To understand this behavior, notice that moving away from the source increases the outage probability of the first hop, and moving away from the destination increases the outage probability of the second hop and vice versa. Thus, the first hop has an optimal position when the UAV is above the source, and the second hop has an optimal position when the UAV is above the destination. Therefore, the optimal position of the total system occurs either at the point where the dominance shifts (in diamonds) or at the source where the first hop is always dominant (in squares), or at the destination where the second hop is always dominant (in circles and triangles).
\begin{figure}[]
	\includegraphics[scale=0.4]{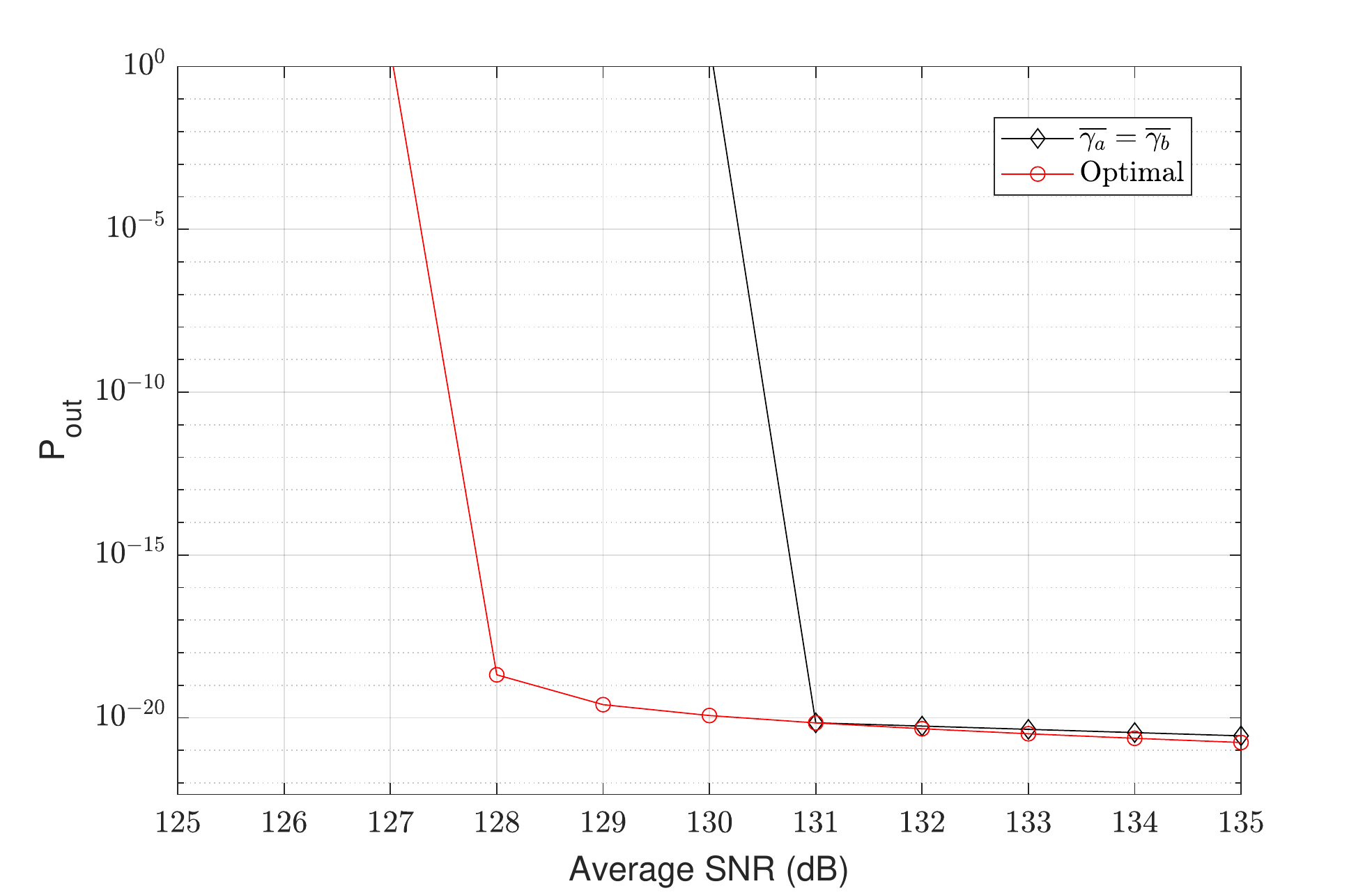}
	\centering
	\caption{The outage probability vs total average SNR $\overline{\gamma_{T}}$ for an optimal choice of $E_s,E_u$ and when $E_s=E_u$, $N{u}=N_{0},$ $K = 3,$ $N_{1} = N_{2}= N_{3} = 100$. $\left(m_{k,1}, m_{k,2}\right) = \left(1, 1\right)$, $\left(\Omega_{k,1}, \Omega_{k,2}\right) = \left(1, 1\right)$, and $K_{0}=4.77$ dB.}		
	\label{f}
\end{figure}

\begin{figure}[]
	\includegraphics[scale=0.51]{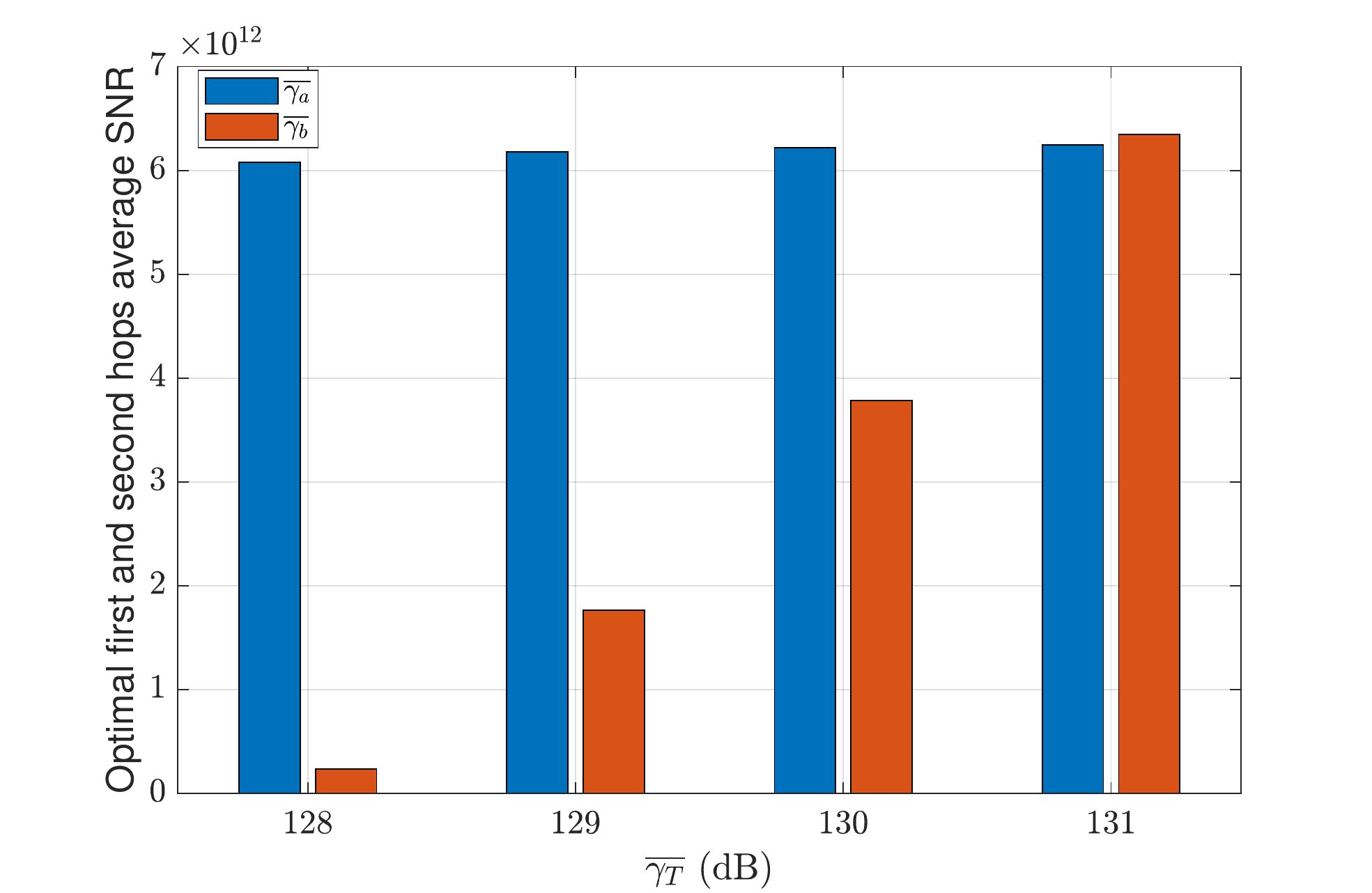}
	\centering
	\caption{The optimal channels average SNR vs total average SNR, $N{u}=N_{0},$ $K = 3,$ $N_{1} = N_{2}= N_{3} = 100$. $\left(m_{k,1}, m_{k,2}\right) = \left(1, 1\right)$, $\left(\Omega_{k,1}, \Omega_{k,2}\right) = \left(1, 1\right)$, and $K_{0}=4.77$ dB.}		
	\label{f}
\end{figure}
Figures 7 and 8 present the optimization results. Figure 7 plots outage probability against the total average SNR for the optimal and equal transmit power distribution. We see that our optimal solution gives a significant decrease in outage probability. We also see in figure 8 that the second hop share increases rapidly as the total power increases, in contrast to the first hop, where the increase is slow.

\section{CONCLUSION}\label{C}
In this paper, the performance of multiple reconfigurable intelligent surfaces-aided dual-hop UAV communication systems was studied over Nakagami-$m$ fading channels for the first hop and Rician fading channel for the second hop. Accurate closed-form approximations were first derived for each channel distribution and then used in deriving closed-form approximations for the outage probability, average symbol error probability, and average channel capacity assuming independent     non-identically distributed channels. Furthermore, an asymptotic expression was derived for the outage probability at the high signal-to-noise ratio regime to get more insights into the system performance. Results showed that multiple RISs could improve the performance and coverage of the UAV communication systems. Finally, we formulated and solved an optimization problem on the transmit power of each channel.
\begin{appendices}
\section{Proof of Corollary 2}
\end{appendices}
It is enough to show that $\int_0^t \frac{x^r}{1+x} d x = \sum_{k=0}^{\infty} \frac{(-1)^{k}}{r-k} t^{r-k}-\pi \csc (\pi r)$ for $r$ not an integer and $t>1$. Indeed, \begin{small} $$
\begin{aligned}
\int_0^t \frac{x^r}{1+x} d x&=\int_0^1 \frac{x^r}{1+x} d x+\int_1^t \frac{x^r}{1+x} d x \\
&=\int_0^1 \sum_{k=0}^{\infty}(-1)^k x^{r+k} d x+\int_1^t \sum_{k=0}^{\infty}(-1)^k x^{r-k-1} d x \\
&=\sum_{k=0}^{\infty} \frac{(-1)^k}{r+k+1}+\sum_{k=0}^{\infty} \frac{(-1)^k t^{r-k}}{r-k}-\sum_{k=0}^{\infty} \frac{(-1)^k}{r-k} \\
&=\sum_{k=0}^{\infty} \frac{(-1)^k t^{r-k}}{r-k}+\sum_{k=0}^{\infty}\left[\frac{(-1)^k}{r+k+1}-\frac{(-1)^k}{r-k}\right].
\end{aligned} 
.$$ \end{small} On the other hand 
\begin{align*}
-\frac{1}{r+1} &= \sum_{k=1}^{\infty}(-1)^k\left[\frac{1}{k+r+1}+\frac{1}{k+r}\right] \\
 &= \sum_{k=1}^{\infty}(-1)^k\left[\frac{1}{r+k+1}+\frac{2 r}{r^2-k^2}-\frac{1}{r-k}\right] 
 \end{align*}
  That is$$
-\frac{1}{r+1} = \sum_{k=1}^{\infty}(-1)^k\left[\frac{2 r}{r^2-k^2}\right]+\sum_{k=1}^{\infty}\left[\frac{(-1)^k}{r+k+1}-\frac{(-1)^k}{r-k}\right]
$$ so
\begin{align}\notag-\frac{1}{r}-\sum_{k=1}^{\infty}(-1)^k\left[\frac{2 r}{r^2-k^2}\right] &=\frac{1}{r+1}-\frac{1}{r}\\ \notag &+\sum_{k=1}^{\infty}\left[\frac{(-1)^k}{r+k+1}-\frac{(-1)^k}{r-k}\right].
\end{align}
Therefore $$
\sum_{k=0}^{\infty}\left[\frac{(-1)^k}{r+k+1}-\frac{(-1)^k}{r-k}\right]=-\frac{1}{r}-\sum_{k=1}^{\infty}(-1)^k\left[\frac{2 r}{r^2-k^2}\right].
$$
But, $-\frac{1}{r}-\sum_{k=1}^{\infty}(-1)^k\left[\frac{2 r}{r^2-k^2}\right]=-\pi \csc (\pi r)$\cite[Eq. (1.422.3)]{Grad.}, which completes the proof.

	\end{document}